\documentclass[11pt]{article}

\usepackage{authblk}
\usepackage{amsthm}
\usepackage{mathtools}
\usepackage{comment}
\usepackage{amsfonts}
\usepackage{amsmath}
\usepackage[margin=1in]{geometry}
\usepackage{graphicx} % Required for inserting images
\usepackage{amsmath,amsthm,amssymb}
\usepackage{thmtools, thm-restate}
\usepackage[ruled]{algorithm} % SN: I think this is what lipics recommends
\usepackage{algpseudocode}
\usepackage{hyperref}   % provides \href and \url
\usepackage{url}        % provides \path (or \usepackage{hyperref} alone may suffice)

\declaretheorem{theorem}
\newtheorem{lemma}{Lemma} 
 
\newtheorem{corollary}{Corollary}
\newtheorem{definition}{Definition}

\newtheorem{remark}{Remark} 

\title{Resource bounded Ku\v{c}era--G\'{a}cs Theorems} 

\author[1]{Satyadev Nandakumar}
\author[1]{Akhil S}
\author[1]{Chandra Shekhar Tiwari}
\affil[1]{
	Department of Computer Science and Engineering\\
	Indian Institute of Technology Kanpur,
	Kanpur, Uttar Pradesh, India.
}
\affil[1]{\{satyadev,akhis,chandrat\}@cse.iitk.ac.in}

\begin{document}

% Decision Probtheorems 
%\newcommand{\SAT}{\mathrm{SAT}}
%\newcommand{\MAXSAT}{\mathrm{MAXSAT}}
%\newcommand{\MAXTSAT}{\mathrm{MAX3SAT}}
\newcommand{\MAXSAT}{\textsc{MaxSat}}
\newcommand{\MAXTSAT}{\textsc{Max3Sat}}
\newcommand{\SAT}{{\rm SAT}}
\newcommand{\MAXCLIQUE}{\textsc{MaxClique}}
\newcommand{\CLIQUE}{\textsc{Clique}}

% Proof Commands
\newcommand{\PROOF}{\begin{proof}}%{{\noindent\bf{\em Proof:\/}~~}}
\newcommand{\PROOFSKETCH}{{\noindent\bf{\em Proof sketch:\/}~~}}
\newcommand{\QED}{\end{proof}}
\newcommand{\QEDbox}{\nolinebreak{\hfill $\square$}\vspace*{0.3cm}}

% Miscellaneous
\newcommand{\third}{3^{\mathrm{rd}}}
\newcommand{\second}{2^{\mathrm{nd}}}
\newcommand{\first}{1^{\mathrm{st}}}
\newcommand{\Prob}{\mathrm{Prob}}
\newcommand{\floor}[1]{ \left\lfloor #1 \right\rfloor }
\newcommand{\ceil}[1]{ \left\lceil #1 \right\rceil }
\newcommand{\range}{\mathrm{range}}
\newcommand{\myset}[2]{ \left\{ #1 \left\;\;:\;\; #2 \right. \right\} }
\newcommand{\mysetl}[2]{ \left\{\left. #1 \right| #2 \right\} }
\newcommand{\bval}[1]{[\![ #1 ]\!]}
\newcommand{\prefix}{\sqsubseteq}
\newcommand{\prprefix}{\underset{\neq}{\sqsubset}}
\newcommand{\diam}{\mathrm{diam}}
\newcommand{\binary}{\{0,1\}^*}
\newcommand{\limn}{\lim\limits_{n\to\infty}}
\newcommand{\liminfn}{\liminf\limits_{n\to\infty}}
\newcommand{\limsupn}{\limsup\limits_{n\to\infty}}
\newcommand{\ith}{{i^{\mathrm{th}}}}
\renewcommand{\Pr}[2]{\underset{#1}{\mathrm{Prob}}\left[#2\right]}
\newcommand{\size}[1]{\parallel\negthinspace #1\negthinspace\parallel}
\newcommand{\restr}{\upharpoonright}
\newcommand{\vb}{{\vec{\beta}}}
\newcommand{\RANDb}{\RAND^\vb}

\newcommand{\join}{\oplus}
\newcommand{\TQBF}{{\rm TQBF}}
\newcommand{\GI}{{\rm GI}}
\newcommand{\parityP}{{\oplus\P}}
\newcommand{\sharpP}{{\#\P}}
\newcommand{\CeqP}{{{\rm C}_=\P}}
\newcommand{\MODkP}{{{\rm MOD}_k\P}}
\newcommand{\BP}{\mathrm{BP}\cdot}

% Blackboard Bold Sets
\newcommand{\N}{\mathbb{N}}
\newcommand{\Z}{\mathbb{Z}}
\newcommand{\Q}{\mathbb{Q}}
\newcommand{\R}{\mathbb{R}}

% Measure and Dimension
\newcommand{\p}{{\mathrm{p}}}
\newcommand{\ptwo}{{\p_{\thinspace\negthinspace_2}}}
\newcommand{\pthree}{{\p_{\thinspace\negthinspace_3}}}
\newcommand{\pspace}{{\mathrm{pspace}}}
\newcommand{\ptwospace}{{\mathrm{p}_{\thinspace\negthinspace_2}\mathrm{space}}}
\newcommand{\poly}{{\mathrm{poly}}}
\newcommand{\rec}{{\mathrm{rec}}}
\newcommand{\all}{{\mathrm{all}}}
\newcommand{\FS}{{\mathrm{FS}}}
\newcommand{\comp}{{\mathrm{comp}}}
\newcommand{\constr}{{\mathrm{constr}}}
\newcommand{\mup}{\mu_{\mathrm{p}}}
\newcommand{\dimall}{\dim_\all}
\newcommand{\dimp}{\dim_\p}
\newcommand{\dimfs}{\mathrm{dim}_\mathrm{FS}}
\newcommand{\dimpspace}{\mathrm{dim}_\mathrm{pspace}}
\newcommand{\dimh}{\mathrm{dim}_\mathrm{H}}
\newcommand{\dimrec}{\mathrm{dim}_\mathrm{rec}}
\newcommand{\dimcomp}{\mathrm{dim}_\comp}

%% NOTE: Change to cdim if required.
\newcommand{\cdim}{\mathrm{dim}}

\newcommand{\dimptwo}{\dim_\ptwo}
\newcommand{\dimpthree}{\dim_\pthree}
\newcommand{\dimptwospace}{\dim_\ptwospace}
\newcommand{\Dimptwo}{\Dim_\ptwo}
\newcommand{\Dimptwospace}{\Dim_\ptwospace}
\newcommand{\muptwo}{\mu_\ptwo}
\newcommand{\mupthree}{\mu_\pthree}
\newcommand{\Deltapthree}{{\Delta^\p_3}}
\newcommand{\dimDeltapthree}{\dim_\Deltapthree}
\newcommand{\Deltaptwo}{{\Delta^\p_2}}
\newcommand{\dimDeltaptwo}{\dim_\Deltaptwo}
\newcommand{\DeltaPtwo}{{\Delta^\P_2}}
\newcommand{\DeltaPthree}{{\Delta^\P_3}}

\renewcommand{\dim}{{\mathrm{dim}}}
\newcommand{\udim}{{\overline{\dim}}}
\newcommand{\ldim}{{\underline{\dim}}}
\newcommand{\dimpack}{{\dim}_{\mathrm{P}}}
\newcommand{\dimb}{{\dim}_{\mathrm{B}}}
\newcommand{\udimb}{{\udim}_{\mathrm{B}}}
\newcommand{\ldimb}{{\ldim}_{\mathrm{B}}}
\newcommand{\udimmb}{\udim_{\mathrm{MB}}}

\newcommand{\Dim}{{\mathrm{Dim}}}
\newcommand{\cDim}{{\mathrm{cDim}}}
\newcommand{\Dimfs}{{\Dim_\FS}}
\newcommand{\Dimp}{{\Dim_\p}}
\newcommand{\Dimcomp}{{\Dim_\comp}}
\newcommand{\Dimpspace}{{\Dim_\pspace}}

\newcommand{\str}{{\mathrm{str}}}
\newcommand{\regSS}{S^\infty}
\newcommand{\strSS}{S^\infty_{\mathrm{str}}}

\newcommand{\Ghat}{{\widehat{\G}}}
\newcommand{\hatG}{\Ghat}
\newcommand{\Gstr}{\G^{\mathrm{str}}}
\newcommand{\hatGstr}{\hatG^{\mathrm{str}}}

\newcommand{\D}{{\mathcal{D}}}
\newcommand{\Gconstr}{\mathcal{G}_\mathrm{constr}}
\newcommand{\hatGconstr}{\mathcal{\hatG}_\mathrm{constr}}
\newcommand{\Gstrconstr}{\mathcal{G}_\mathrm{constr}^\mathrm{str}}
\newcommand{\hatGstrconstr}{\mathcal{\hatG}_\mathrm{constr}^\mathrm{str}}

\newcommand{\bd}{{\mathbf{d}}}
\newcommand{\FREQ}{{\mathrm{FREQ}}}
\newcommand{\freq}{{\mathrm{freq}}}

% Prediction
\newcommand{\pred}{\mathrm{pred}}
\newcommand{\Pred}{\mathrm{Pred}}
\newcommand{\Predp}{\mathrm{Pred}_\p}
\newcommand{\predp}{\mathrm{pred}_\p}
\newcommand{\unpred}{\mathrm{unpred}}
\newcommand{\Unpred}{\mathrm{Unpred}}
\newcommand{\unpredlog}{\mathrm{unpred}^\mathrm{log}}
\newcommand{\Unpredlog}{\mathrm{Unpred}^\mathrm{log}}
\newcommand{\unpredlogp}{\mathrm{unpred}^\mathrm{log}_\mathrm{p}}
\newcommand{\unpredabs}{\mathrm{unpred}^\mathrm{abs}}
\newcommand{\unpredabsp}{\mathrm{unpred}^\mathrm{abs}_\mathrm{p}}
\newcommand{\predcomp}{\pred_\comp}
\newcommand{\predfs}{\pred_\mathrm{FS}}
\newcommand{\Ik}{{\cal I}_k}
\newcommand{\Ikbar}{\overline{\Ik}}
\newcommand{\Hk}{{\cal H}_k}

\newcommand{\calL}{{\mathcal{L}}}
\newcommand{\logL}{\calL^{\log}}
\newcommand{\strlogL}{\calL^{\log}_{\str}}
\newcommand{\abs}{\mathrm{abs}}
\newcommand{\absL}{\calL^{\abs}}

\newcommand{\loss}{{\mathit{loss}}}
\newcommand{\losslog}{\loss^{\log}}
\newcommand{\lossabs}{\loss^\abs}

% Complexity Classes
\newcommand{\co}[1]{\mathrm{co}#1}
\newcommand{\strings}{\{0,1\}^*}
\newcommand{\ALL}{\mathrm{ALL}}
\newcommand{\RAND}{\mathrm{RAND}}
\newcommand{\RANDweak}{\mathrm{RAND}_\mathrm{weak}}
\newcommand{\DIM}{\mathrm{DIM}}
\newcommand{\DIMstr}{\DIM_\str}
\newcommand{\coDIM}{\mathrm{coDIM}}
\newcommand{\REC}{\mathrm{REC}}
\newcommand{\DEC}{\mathrm{DEC}}
\newcommand{\CE}{\mathrm{CE}}
\newcommand{\coCE}{\mathrm{coCE}}
\newcommand{\RE}{\mathrm{RE}}
\newcommand{\coRE}{\mathrm{coRE}}
\newcommand{\SPARSE}{\mathrm{SPARSE}}
\newcommand{\DENSE}{\mathrm{DENSE}}
\newcommand{\DENSEio}{\mathrm{DENSE}_\mathrm{i.o.}}
\newcommand{\DTIME}{\mathrm{DTIME}}
\newcommand{\NTIME}{\mathrm{NTIME}}
\newcommand{\DTIMEF}{\mathrm{DTIMEF}}
\newcommand{\NP}{{\ensuremath{\mathrm{NP}}}} 
\newcommand{\coNP}{\co{\NP}}
\newcommand{\AM}{{\rm AM}}
\newcommand{\MA}{{\rm MA}}
\newcommand{\DeltaPTwo}{\Delta^{\mathrm{P}}_2}
\newcommand{\DP}{\mathrm{DP}}
\newcommand{\SUBEXP}{\mathrm{SUBEXP}}
\newcommand{\SIZE}{{\rm SIZE}}
\newcommand{\SIZEio}{\SIZE_{\rm i.o.}}
\newcommand{\NSIZE}{{\rm NSIZE}}
\newcommand{\NSIZEio}{\NSIZE_{\rm i.o.}}
\newcommand{\BPP}{{\rm BPP}}
\newcommand{\PF}{{\rm PF}}
\newcommand{\ZPP}{{\rm ZPP}}
\newcommand{\RP}{{\rm RP}}
\newcommand{\coRP}{\co{\RP}}
\newcommand{\APP}{{\rm APP}}
\newcommand{\AP}{{\rm AP}}
\newcommand{\E}{{\rm E}}
\newcommand{\NE}{{\rm NE}}
\newcommand{\EE}{{\rm EE}}
\newcommand{\ESPACE}{{\rm ESPACE}}
\newcommand{\PSPACE}{{\rm PSPACE}}
\renewcommand{\P}{\ensuremath{{\mathrm P}}}
\newcommand{\QP}{{\rm QP}}
\newcommand{\NQP}{{\rm NQP}}
\newcommand{\Etwo}{{\rm E_2}}
\newcommand{\Ethree}{{\rm E_3}}
\newcommand{\EtwoSPACE}{{\rm E_2SPACE}}
\newcommand{\EXP}{{\rm EXP}}
\newcommand{\NEXP}{{\rm NEXP}}
\newcommand{\DSPACE}{{\rm DSPACE}}
\newcommand{\DSPACEF}{{\rm DSPACEF}}
\newcommand{\NEE}{{\rm NEE}}
\newcommand{\Poly}{{\rm P}}
\newcommand{\Pc}{{\rm P}}
\newcommand{\PPoly}{{\rm {P/Poly}}}
\newcommand{\Psel}{{\rm {P-sel}}}
\newcommand{\WS}{{\rm WS}}
\newcommand{\Ei}{\E_i}
\newcommand{\EiSPACE}{\mathrm{E}_i\mathrm{SPACE}}
\newcommand{\EoneSPACE}{\mathrm{E}_1\mathrm{SPACE}}
\newcommand{\linear}{\mathrm{linear}}
\newcommand{\polynomial}{\mathrm{polynomial}}
\newcommand{\EXPSPACE}{\mathrm{EXPSPACE}}
\newcommand{\PP}{{\rm PP}}
\newcommand{\SPP}{{\rm SPP}}
\newcommand{\GapP}{{\rm GapP}}
\newcommand{\PH}{{\rm PH}}
\newcommand{\Ppoly}{{\rm P/poly}}
\newcommand{\EH}{{\rm EH}}
\newcommand{\EHsigmaone}{{\Sigma_1^\E}}
\newcommand{\EHsigmatwo}{{\Sigma_2^\E}}
\newcommand{\EHsigmathree}{{\Sigma_3^\E}}
\newcommand{\EHdeltaone}{{\Delta_1^\E}}
\newcommand{\EHdeltatwo}{{\Delta_2^\E}}
\newcommand{\EHdeltathree}{{\Delta_3^\E}}
\newcommand{\EHpione}{{\Pi_1^\E}}
\newcommand{\EHpitwo}{{\Pi_2^\E}}
\newcommand{\EHpithree}{{\Pi_3^\E}}
% Calligraphy
\newcommand{\calA}{{\cal A}}
\newcommand{\calB}{{\cal B}}
\newcommand{\calC}{{\cal C}}
\newcommand{\calD}{{\cal D}}
\newcommand{\F}{{\mathcal{F}}}
\newcommand{\X}{{\mathcal{X}}}
\newcommand{\V}{{\mathcal{V}}}
\newcommand{\calH}{{\cal H}}
\newcommand{\calP}{{\cal P}}

% Kolmogorov Complexity
\newcommand{\calKS}{{\cal KS}}
\newcommand{\calKT}{{\cal KT}}
\newcommand{\calK}{{\cal K}}
\newcommand{\calCD}{{\cal CD}}
\newcommand{\calCND}{{\cal CND}}
\newcommand{\K}{{\mathrm{K}}}
\newcommand{\KS}{{\mathrm{KS}} }
\newcommand{\KT}{{\mathrm{KT}} }
\newcommand{\CND}{{\mathrm{CND}} }

% Reductions
\newcommand{\leqpm}{\leq^\p_\mathrm{m}}
\newcommand{\leqmp}{\leqpm}
\newcommand{\leqpT}{\leq^\mathrm{p}_\mathrm{T}}
\newcommand{\leqpr}{\leq^\mathrm{p}_r}
\newcommand{\leqonettp}{\leq_{1-\mathrm{tt}}^{\mathrm{P}}}
\newcommand{\murec}[1]{{\mu({#1}\mid \REC)}}
\newcommand{\muex}[1]{{\mu({#1}\mid \Etwo)}}
\newcommand{\redu}[2]{{\leq^{\rm {{#1}}}_{{\rm{#2}}}}}
\newcommand{\predu}[1]{{\redu{P}{#1}}}
\newcommand{\pmred}{{\predu{m}}}
\newcommand{\ptred}{{\predu{T}}}

%pair reductions

\newcommand{\leqppm}{\leq^{\p\p}_\mathrm{m}}
\newcommand{\leqppum}{\leq^{\p\p}_\mathrm{um}}

\newcommand{\leqppsm}{\leq^{\p\p}_\mathrm{sm}}
\newcommand{\leqppT}{\leq^{\mathrm{pp}}_\mathrm{T}}
\newcommand{\leqppuT}{\leq^{\mathrm{pp}}_\mathrm{uT}}

% Reduction Closures
\newcommand{\Pm}{\mathrm{P}_{\mathrm{m}}}
\newcommand{\PT}{\mathrm{P}_{\mathrm{T}}}
\newcommand{\Ptt}{\mathrm{P}_{\mathrm{tt}}}
\newcommand{\Pctt}{\mathrm{P}_{\mathrm{ctt}}}
\newcommand{\Pdtt}{\mathrm{P}_{\mathrm{dtt}}}
\newcommand{\Ponett}{\mathrm{P}_{1-\mathrm{tt}}}
\newcommand{\Pmi}{\Pm^{-1}}
\renewcommand{\Pr}{\mathrm{P}_r}
\newcommand{\Pri}{\Pr^{-1}}
% Degrees
\newcommand{\degpm}{\mathrm{deg}^{\mathrm{p}}_{\mathrm{m}}}
\newcommand{\degpr}{\mathrm{deg}^\mathrm{p}_r}
\newcommand{\equivpm}{\equiv^\p_\mathrm{m}}
% Hard and Complete sets
\newcommand{\m}{\mathrm{m}}
\newcommand{\Cpm}{\calC^\p_\m}
\newcommand{\Hpm}{\calH^\p_\m}
\newcommand{\Cpr}{\calC^\p_r}
\newcommand{\Hpr}{\calH^\p_r}

% Stuff from CPED
\newcommand{\CH}{\mathcal{H}}
\newcommand{\CHdec}{\mathcal{H}_\DEC}
\newcommand{\CHce}{\mathcal{H}_\CE}
\newcommand{\Pione}{\Pi^0_1}
\newcommand{\Pitwo}{\Pi^0_2}
\newcommand{\BorelPi}[2]{{\bf \Pi}^ #1 _#2}
\newcommand{\BorelSigma}[2]{{\bf \Sigma}^ #1_#2}
\newcommand{\Sigmaone}{\Sigma^0_1}
\newcommand{\Sigmatwo}{\Sigma^0_2}
\newcommand{\PioneC}{\Pione}% \upharpoonright \C}
\newcommand{\PitwoC}{\Pitwo}% \upharpoonright \C}
\newcommand{\SigmaoneC}{\Sigmaone}% \upharpoonright \C}
\newcommand{\SigmatwoC}{\Sigmatwo}% \upharpoonright \C}
\newcommand{\io}{{\mathrm{i.o.}}}
\newcommand{\aev}{{\mathrm{a.e.}}}
\newcommand{\ioA}{A^\io}
\newcommand{\ioB}{B^\io}
\newcommand{\ioC}{C^\io}
\newcommand{\aeA}{A^\aev}
\newcommand{\aeB}{B^\aev}

% Names
\newcommand{\MartinLof}{Martin-L\"{o}f}

\newcommand{\nth}{n^{\mathrm{th}}}

\newcommand{\supp}{\mathrm{supp}}
\newcommand{\eH}{\mathrm{H}}

% Dimension Entropy Rate
\newcommand{\SpanP}{\mathrm{SpanP}}

% Comments
% \usepackage[svgnames]{xcolor}
% \newcommand{\comment}[1]{{\color{DarkGreen} #1}}

% Contined Fractions
\newcommand{\denominator}{\text{D}}
\newcommand{\numerator}{\text{N}}

%These commands are provided for left-closed right-open intervals and
%left-open right-closed intervals, since writing intervals with
%different brackets fouls up Emacs filling
\newcommand{\lcro}[2]{[#1,#2)}
\newcommand{\lorc}[2]{(#1,#2]}
\renewcommand{\L}{\mathcal{L}}

\newcommand{\Rplus}{[0,\infty)}
\renewcommand{\F}{\mathcal{F}}
\newcommand{\U}{\mathcal{U}}

\newcommand{\Kpoly}{\mathcal{K}_{\mathsf{poly}}}
\newcommand{\cdimp}{{\cdim}_\mathrm{P}}

\newtheorem{construction}{Construction}[section]
\newcommand{\exact}{}

\maketitle

%TODO mandatory: add short abstract of the document
\begin{abstract}
	 The Ku\v{c}era--G\'{a}cs  theorem \cite{Kucera84} \cite{Gacs86} is a fundamental result in algorithmic randomness. It states that every infinite
	sequence $X$ is Turing reducible to a Martin-L\"of random $R$.
	This paper studies resource-bounded analogues of the Kučera-Gács Theorem, at the resource bounds of polynomial-time and finite-state computation.
	
	We prove a \emph{quasi-polynomial-time}{ Kučera-Gács Theorem}, showing that every infinite sequence $X$ is quasi-polynomial-time reducible to a \emph{polynomial-time random} sequence $R$. We also show that for any $X$, the oracle use of $R$ is $n+o(n)$ bits for obtaining the first $n$ bits of $X$.

	We then study the relationship between compressibility and Turing reductions, in the polynomial-time setting.
	We establish that $\rho^-_{\mathsf{poly}}(X) = \mathcal{K}_{\mathsf{poly}}(X)$, demonstrating that the lower polynomial-time Turing decompression ratio is precisely characterized by the polynomial-time Kolmogorov complexity rate. We note that this characterization \emph{fails} for the polynomial-time dimension ($\cdimp$) if one-way functions exist, resolving an open problem from Doty's \cite{Doty08} work.

	We use these results to strengthen the {quasi-polynomial-time}{ Kučera-Gács Theorem}. We show that every infinite sequence $X$ is quasi-polynomial-time reducible to a {polynomial-time random} sequence $R$, where the lower oracle use rate of the reduction is less than $\Kpoly(X)$.

	We also establish a lower bound for how powerful the reduction needs
	to be if we have to extract any sequence from randoms. We show that any sequence  extracted from the (even larger) set of \emph{normal sequences} by a finite-state reduction must have a convergent asymptotic frequency for its symbols. Since sequences lacking this invariant property exist, they cannot be finite-state reduced from any normal sequence. Hence we show that the Kučera-Gács theorem \emph{fails} for finite-state reductions.
\end{abstract}
\newpage

\section{Introduction}
There is a fundamental dichotomy between structure and randomness.
Kolmogorov utilized computability theory to provide a new foundation
for information and probability theory by introducing the concept of an
``individual random string''. His key insight was to use
incompressibility as a basis for randomness, after fixing a universal
mode of description-algorithms. Kolmogorov defined a finite string
$x$ to be incompressible if the shortest program that outputs $x$ is
at least as long as $x$ itself. Martin-L\"of extended this approach to
define an individual random \emph{infinite} sequence. This concept is robust - it has equivalent
definitions using constructive measure theory, unpredictability,
compressibility, betting (martingales), and others (see, for example, Downey,
Hirschfeldt \cite{Downey10}, and Nies \cite{Nies2009}).

A striking phenomenon emerges when we examine how Martin-L\"of randoms
interact with Turing reducibility. Independently, Ku\v{c}era
\cite{Kucera84} and G\'{a}cs \cite{Gacs86} show that every infinite
sequence $X$ is Turing reducible to a Martin-L\"of random $R$. This
 result has led to a rich literature studying the
interplay between the strength of reductions and notions of
randomness.  G\'{a}cs' original result \cite{Gacs86} had a redundancy of $o(n)$, with  
$n+\sqrt{n} \log(n)$ oracle usage to determine the first $n$ bits of $X$ in the
reduction. Downey and Hirschfeldt \cite{Downey10} show that we cannot
improve this to $n+O(1)$.  Doty \cite{Doty08} established a deep
connection between the oracle usage and the effective Hausdorff
dimension - informally, the information rate - of $X$, showing that
the latter is precisely characterized by the oracle usage in the
Ku\v{c}era--G\'{a}cs theorem.

Resource-bounded reducibilities have also been studied. Book, Lutz and
Wagner \cite{Book1996} show that the characteristic sequence of a
decidable language $L$ is polynomial-time reducible to Martin-L\"of
randoms if and only if $L \in \BPP$. Thus, not every sequence is polynomial-time reducible to a Martin-L\"of random. This naturally raises the question: \emph{does an analogue of the
Kučera--Gács theorem hold when both randomness and reducibility are
restricted by feasible computation? }
Doty \cite{Doty08}
shows that his characterization of dimension also extends to the
oracle use in $\PSPACE$-bounded reducibility to $\PSPACE$ randoms, and
establishes a one-sided inequality for polynomial-time. However, the converse for
polynomial-time reduction remained open.
In this work, we resolve this gap. 

We first show that every sequence
is quasi-polynomial-time reducible to some polynomial-time random. Moreover, the reduction uses only $n+o(n)$ oracle bits to compute
the first $n$ bits of the sequence.
We then characterize one notion of
polynomial-time dimension, namely, that using time-bounded
Kolmogorov complexity rate using polynomial-time reductions to some (not necessarily random) sequence. In a recent work, Akhil, Nandakumar,
Pulari and Sarma \cite{OWFDimension} establish that notions of
polynomial-time dimension using betting algorithms called gales and
that using Kolmogorov complexity rates are \emph{inequivalent} in
polynomial-time, if cryptographic one-way functions exist. Thus Doty's
theorem \emph{does not} extend to polynomial-time dimension if one-way functions
exist, even though we obtain an alternate characterization using
Kolmogorov complexity rates.

We also establish a lower bound for how powerful the reduction needs
to be if we have to extract any sequence from randoms. We show that for 
finite-state reductions between infinite sequences,
even when we weaken the notion of randomness to include all normal
sequences (sequences where every finite string $w$ appears with
asymptotic frequency $2^{-|w|}$ ), we cannot extract every sequence. We
establish an \emph{invariant property} maintained by finite-state
reductions: if $X$ is finite-state reducible to a normal sequence $N$,
then $X$ must have a well-defined asymptotic frequency for all letters. But, it is easy to construct sequences where the
asymptotic frequency of a letter is undefined. Hence, such sequences
cannot be finite-state reducible to a normal sequence (and \emph{a fortiori}, to polynomial-time randoms).

\subsection{Our contributions.}
This work studies the extent to which the Ku\v{c}era--G\'{a}cs theorem persists under
computational resource bounds, with a particular focus on polynomial-time and
finite-state reducibilities.
We obtain the following results.
\begin{itemize}
	\item \textbf{Quasi-polynomial-time Ku\v{c}era--G\'{a}cs.}
	We show that for every infinite sequence $X$ there exists a polynomial-time random
	sequence $R$ such that $X$ is computable from $R$ via a quasi-polynomial-time reduction.
	Moreover, only $u_n = n+o(n)$ bits of the random oracle $R$ are required to compute the
	first $n$ bits of any given sequence $X$.

    \item \textbf{Polynomial-time decompression ratio characterisation.}
	We establish that the lower
	polynomial-time decompression ratio $\rho^-_{\mathsf{poly}}$ coincides with the polynomial-time Kolmogorov complexity rate $\mathcal{K}_{\mathsf{poly}}$. We show that $\rho^-_{\mathsf{poly}}$ and polynomial-time dimension $\cdimp$ are not equal in general, under the assumption that one-way functions exist. This resolves an open question from Doty \cite{Doty08}.
	
	\item \textbf{Dimension-sensitive Ku\v{c}era--G\'{a}cs.}
	We further strengthen the quasi-polynomial-time Ku\v{c}era--G\'{a}cs theorem by
	showing that the oracle rate use can be bounded by the $\mathcal{K}_{\mathsf{poly}}$
	dimension of the sequence, that is,
	$\liminf_{n\to\infty} u_n/n \le \mathcal{K}_{\mathsf{poly}}(X)$.
	
	\item \textbf{No Ku\v{c}era--G\'{a}cs for finite-state reductions.}
	We show that no analogue of the Ku\v{c}era--G\'{a}cs theorem holds for finite-state reductions. Even when we start from the set of normal numbers (the weakest notion of randomness), the sequences generated by a finite-state reduction must have well-defined limiting symbol frequencies, and therefore cannot generate all sequences.
	
\end{itemize}

\section{Preliminaries}

\subsection{Notation}
$\N$ denotes the set of natural numbers, $\Q$ denotes the set of rationals, and $\R$ denotes the set of real numbers.
$\Sigma$ denotes a non-empty finite alphabet, which
contains at least two symbols. $\Sigma^*$ denotes the set of all finite
strings over $\Sigma$. For a finite string $w \in \Sigma^*$,
$\lvert w \rvert$ denotes the length of $w$. For $n \in \mathbb{N}$, $\Sigma^n$ denotes the set of strings of length exactly $n$ over $\Sigma$. 
We write $\Sigma^{<n} $ for the set of strings of length strictly less than $n$, and 
$\Sigma^{\leq n}$ for the set of strings of length at most $n$.
$\Sigma^\infty$ denotes
the set of all infinite strings over $\Sigma$. $\lambda$ denotes the
empty string. For two strings $u, v \in \Sigma^*$, $u\cdot v$ represents
the string formed by concatenating $v$ after $u$.
For an infinite string $X \in \Sigma^\infty$, $X[i]$ denotes the
$i^{\text{th}}$ bit of $X$, with the indices starting from $0$. $X \restr n$ denotes the first $n$ bits
of $X$. For an infinite string $X=a_0\cdot a_1\cdot a_2\cdot a_3 \dots$ (where
$a_i \in \Sigma$) and $m \geq n$, $X[n:m]$ denotes the substring
$a_n\cdot a_{n+1}\cdot a_{n+2} \dots a_{m}$. We use the same notation for
finite strings $x \in \Sigma^*$ as well. 
For a finite set $S$, $|S|$ denotes the number of elements in $S$.
$\log$ denotes the logarithm function, typically to the base 2.
$O$ denotes the big-O notation. $o$ denotes the little-O
notation. $\omega$ denotes the little-$\Omega$
notation. See \cite{AroraBarak} for definitions.
In our work, $\poly$ denotes the
set of all univariate polynomials.
	A function $t : \mathbb{N} \to \mathbb{N}$ is \emph{time-constructible} if there exists a deterministic Turing machine that, on input $1^n$, outputs $t(n)$ in $\mathcal{O}(t(n))$ time. A prefix-free machine is a partial computable function $P:\{0,1\}^*\to\{0,1\}^*$ whose domain is prefix-free, i.e., no string in domain of $P$ is a prefix of another.
Let $M$ be an oracle Turing machine that computes a sequence $X \in \Sigma^\infty$ with oracle access to $A \in \Sigma^\infty$. For any $n \in \N$, the oracle use $u_n$ is the  the largest position of $A$ queried during the computation of $X \restr n$. $K(x)$ denotes the Kolmogorov complexity of the string $x \in \Sigma^*$ (See \cite{LiVitanyi}). In this work, quasi-polynomial-time denotes any time (constructible) bound $t(n)$ such that $t(n) = \omega(n^k)$ for all $k \in \N$. 

\section{Martin-L\"of randoms and Ku\v{c}era--G\'{a}cs Theorem}

In the theory of algorithmic randomness, a sequence is Martin-L\"of random if it avoids every effective test of randomness. 
We use the definition of randomness using betting strategies called martingales.
There are multiple equivalent approaches, for instance, one using Kolmogorov
complexity (see Downey and Hirschfeldt
\cite{Downey10} or Nies \cite{Nies2009}).
\begin{definition} [Martingales] \label{def:Mgales}
	A martingale is a function $d:\Sigma^* \to [0,\infty)$ such that
	$d(\lambda)\le 1$, and for every string $w$, we have
	$2 \cdot d(w)=d(w0)+d(w1)$. 
	
	A martingale succeeds on an infinite binary sequence $X$ if
	$\limsup_{n \to \infty} d(X \upharpoonright n) = \infty$. The
	\emph{success set} of a martingale $d$, denoted $S^\infty[d]$, is the
	set of infinite sequences over which $d$ succeeds. 
\end{definition}
Intuitively, the martingale value $d(w)$ represents the capital of a betting strategy after betting on the individual bits of $w$. This formulates a fair betting strategy in which the expected capital after any position equals the capital before the bet is placed.

A martingale $d: \Sigma^* \to [0,\infty)$ is said to be \emph{lower
	semicomputable} if there is a function $\hat{d}: \Sigma^* \times \N
\to [0,\infty) \cap \Q$ such that for each $w$ and $n$, we have
$\hat{d}(w,n) \le \hat{d}(w,n+1) \le d(w)$ and for every $w$, $\lim_{n
	\to \infty} \hat{d}(w,n) = d(w)$. 

\textbf{Universal martingale} : There is a universal lower
semicomputable martingale $d_0$ such that for any lower semicomputable
martingale $d$, there is a constant $c_d$ such that for all $w$, we
have $d(w) \le c_d \cdot d_0(w)$ (See \cite{Downey10}). Thus $d_0$ succeeds on all sequences that
any lower semicomputable martingale succeeds on.

\textbf{Martin-L\"of random} : An infinite binary sequence $X$ is said to be \emph{Martin-L\"of random} if a universal lower
semicomputable martingale $d_0$ \emph{does not} succeed on $X$. 

It is known that
the set of Martin-L\"of randoms has Lebesgue measure 1. Hence almost every sequence in $\Sigma^\infty$ is Martin-L\"of random. An extensive
body of work (see, for example, \cite{Downey10}, \cite{Nies2009})
shows that \emph{every} Martin-L\"of random exhibits many properties
which hold with probability 1.

\textbf{Ku\v{c}era--G\'{a}cs theorem} :
The following Ku\v{c}era--G\'{a}cs theorem is a fundamental property of
Martin-L\"of randomness. It asserts that every infinite sequence can
be effectively recovered from some Martin-L\"of random sequence.

\begin{theorem}[Ku\v{c}era \cite{Kucera84}, G\'{a}cs \cite{Gacs86}]
	\label{thm:KuceraGacs}
	For every sequence $X \in \Sigma^\infty$, there exists a Martin-L\"of
	random sequence $R \in \Sigma^\infty$ such that
	$
	X \leq_T R.
	$
	Moreover, the oracle use $u_n$ required to compute the first $n$ bits of
	$X$ from $R$ satisfies
	$
	u_n = n + o(n).
	$
\end{theorem}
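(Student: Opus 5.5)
We will follow Gács' construction: encode $X$ into a path through a nonempty $\Pi^0_1$ class of Martin-L\"of randoms, and decode by a computable procedure.

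\textbf{Step 1: a $\Pi^0_1$ class of randoms.} Fix the universal lower semicomputable martingale $d_0$ and a rational $c>1$, say $c=2$. Let
\[
P=\{R\in\Sigma^\infty : d_0(R\restr m)\le c \text{ for all } m\}.
\]
Since $d_0$ is lower semicomputable, the complement of $P$ is effectively open, so $P$ is a $\Pi^0_1$ class. By Ville's inequality $P$ has measure at least $1-1/c>0$, and every $R\in P$ is Martin-L\"of random. Rather than work with $P$ directly, we use the \emph{conditional capital} $d_0(w)$. On any string $w$ with $d_0(w)\le c$, the martingale identity forces the measure of $P$ below $w$ to be at least $(1-d_0(w)/c)2^{-|w|}$. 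We need a quantitative lower bound on this measure in order to branch.

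\textbf{Step 2: blockwise encoding.} Choose block lengths $k_1,k_2,\dots$ with $k_i\to\infty$ slowly, for example $k_i\approx i$, and redundancies $r_i$ with $\sum_i 2^{-r_i}$ small relative to the measure slack. Take $r_i\approx 2\log i$, so that $\sum_{i\le j} r_i=o(\sum_{i\le j}k_i)$.

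Inductively, maintain a string $\sigma_j$, with $|\sigma_j|=\sum_{i\le j}(k_i+r_i)$, such that $\mu(P\cap[\sigma_j])\ge \delta_j 2^{-|\sigma_j|}$ for some $\delta_j$ bounded below. To encode the next $k$ bits of $X$, split the extensions of $\sigma_j$ of length $k+r$ into $2^k$ consecutive groups (in lexicographic order), each of size $2^r$. Group $b$ encodes block value $b$.

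By a measure counting argument, in each group some extension $\tau$ keeps $\mu(P\cap[\sigma_j\tau])\ge \delta_{j+1}2^{-|\sigma_j\tau|}$, where $\delta_{j+1}=\delta_j(1-2^{-r})$ or similar. The argument runs as follows. The "bad" extensions, those whose relative measure drops below half the average, can be charged against the total measure. Taking the leftmost good extension in the group corresponding to $X$'s block makes this work.

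Set $R=\lim_j\sigma_j$. Since $P$ is closed and each $[\sigma_j]$ meets $P$, we get $R\in P$, so $R$ is Martin-L\"of random.

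\textbf{Step 3: decoding.} Given $R$, the decoder reads $R$ block by block. It recovers block $j$ of $X$ as the index of the group, in the split of the extensions of $R\restr|\sigma_{j-1}|$, that contains $R$'s next $k_j+r_j$ bits. This step is trivially computable and needs no knowledge of $P$, because the group index is just the top $k_j$ bits of the extension.

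The oracle use to output $n=\sum_{i\le j}k_i$ bits is $\sum_{i\le j}(k_i+r_i)+O(k_{j+1}+r_{j+1})$. Because $r_i=o(k_i)$ on average and $k_{j+1}=o(n)$, this is $n+o(n)$.

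\textbf{Main obstacle.} The delicate step is the measure bookkeeping in Step 2. A group of $2^r$ extensions might all have small $P$-measure even though $\sigma_j$ has good measure overall, because the measure could be concentrated in other groups. The standard fix, which is Gács' trick, is to require only $\prod_j(1-2^{-r_j})>0$ with the density taken relative to $2^{-|\sigma|}$, and to argue by contradiction.

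Suppose every extension in group $b$ had density below $\delta_j(1-2^{-r})$. Then the total measure lost across the whole tree is bounded by $2^{-r}$ times the mass, and we must ensure $\mu(P)$ exceeds the accumulated losses $\sum_j 2^{-r_j}$. Making this precise requires a density invariant of the form "$\mu(P\cap[\sigma])\ge \delta 2^{-|\sigma|}$ implies that every group of $2^r$ extensions with $2^{-r}<\delta$ contains a good member."

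Since this is false for arbitrary closed sets, we instead pass to the closed subclass $P'\subseteq P$ of sequences all of whose prefixes along the block boundaries keep density above the running threshold. We then show that $P'$ is nonempty using $\sum_j 2^{-r_j}<1-1/c$. This is where we will have to be careful.
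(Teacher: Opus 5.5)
\textbf{There is a genuine gap, and it breaks the scheme itself, not just the bookkeeping.}

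Your decoder in Step 3 outputs the top $k_j$ bits of each block of $R$. So $X$ is simply $R$ restricted to the fixed computable infinite set of positions $S=\bigcup_j[\,|\sigma_{j-1}|,\,|\sigma_{j-1}|+k_j)$. The map $R\mapsto R\restr S$ is computable and measure-preserving. It therefore carries Martin-L\"of randoms to Martin-L\"of randoms: pull back any Martin-L\"of test covering $R\restr S$ to a test of the same measure covering $R$.

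Consequently, your construction can only succeed when $X$ is itself Martin-L\"of random. For $X=0^\infty$ there is no $R\in P$ of the required form. The same objection applies to any decoder that ``needs no knowledge of $P$'' and sends equal-size groups to code values.

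This also shows exactly where Step 2 breaks. The claim that every group of $2^{r}$ extensions contains a good one is false. All the surviving mass of $P$ below $\sigma_j$ can avoid an entire group; for instance, a martingale can stake its whole capital on the first $k$ bits of the block. An emptied group costs a $2^{-k}$ fraction of the mass, not $2^{-r}$, and your slack $1-\delta_j$ is a constant. Your repair in the final paragraph does not touch this. Nonemptiness of $P'$ is not the issue: you would need $P'$ to meet the group-path of \emph{every} $X$, and the randomness-preservation argument above rules that out.

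\textbf{The missing idea is adaptivity.} The correspondence between block values and extensions must adapt to $P$ (or to $d_0$). The set of extensions encoding a value has to follow where the surviving mass actually lies, so the decoder must consult $P$ or $d_0$.

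The paper's resource-bounded analogue (Theorem~\ref{thm:quasiPolynKG}) does exactly this. Algorithm~\ref{alg:extendR} binary-searches on $N(w,y,\ell_i,\delta_i)$, a lower bound on the number of losing extensions below $y$ computed from $d(wy)$. Algorithm~\ref{alg:decodeX} reruns that search, which is possible there because $d$ is exactly computable. For this classical statement the paper gives no proof; it only cites Ku\v{c}era and G\'{a}cs.

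In the unrestricted setting, $d_0$ is only lower semicomputable. The decoder can therefore certify that extensions are bad, but never that they are good. The coding must be arranged so that each value is eventually confirmed by enumerating the complement of $P$. Ku\v{c}era's leftmost/rightmost coding does this: the decoder waits until everything on one side of $R$'s extension is certified bad. G\'{a}cs' multi-bit refinement is what yields use $n+o(n)$.

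Your proposal contains no such mechanism. So the existence of the required $R$ is not established, and for nonrandom $X$ it is false for your decoder.
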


Here, $X \leq_T R$ means that the sequence 
$X$ is computable by a Turing machine with oracle access to 
$R$.
Informally, the Ku\v{c}era--G\'{a}cs theorem states that any information
that can be encoded into an infinite binary sequence can be embedded into a Martin-L\"of random sequence in
such a way that it remains effectively recoverable \cite{BarmpaliasLewisPye18}.
Here, effective recoverability means computability via a Turing
reduction, with no restrictions on time or space.

\section{A Quasi-Polynomial time Ku\v{c}era--G\'{a}cs Theorem}

In this work, we study the polynomial-time variants of the Ku\v{c}era--G\'{a}cs theorem.  
We replace both randomness and reducibility with their polynomial-time analogues. Randomness is taken to mean polynomial-time randomness, i.e., a sequence on which no polynomial-time computable martingale succeeds. Additionally, the reduction from a random sequence to any given sequence is required to run in quasi-polynomial-time.

%TODO : Maybe mention PSPACE random -> poly-time redn does not work ?

We hence establish a quasi-polynomial-time analogue of the Ku\v{c}era--G\'{a}cs theorem. 
We show that
that for all sequences $X \in \Sigma^\infty$,
there exists a polynomial-time random $R \in \Sigma^\infty$ such that
$X$ can be computed from $R$, using a reduction that takes at most
quasi-polynomial-time. Additionally, we require only $n + o(n)$ bits
from $R$ to construct the first $n$ bits of $X$.

\subsection{Polynomial-time Randomness}

Polynomial-time randomness adapts classical notions of Martin-L\"of randomness to a computationally feasible polynomial-time setting. Rather than asking whether a sequence is random to any algorithm, this notion asks whether it appears random to any efficient polynomial-time algorithm. Defined via polynomial-time computable martingales, polynomial-time randomness formalizes the idea that no efficient betting strategy running in polynomial-time can gain anything but a finite advantage over a sequence. Hence, this captures the generalised notion of a polynomial-time test of randomness.

\begin{definition} [Juedes and Lutz \cite{JuedesLutz95}] 
	A martingale $d: \Sigma^{*} \to \Q$ is called  \exact $t(n)$-time
	computable \footnote{Technically, these martingales are called exact-$t(n)$-time computable martingales, 
		since their values are rationals. In this work, we omit the term 
		``exact'' for simplicity.}
	 if there exists a Turing machine $M$ such that, for every input $w \in \Sigma^n$, $M$ outputs $d(w)$ in time at most $t(n)$. 
\end{definition}

A martingale is called   \emph{polynomial-time
computable} if it is   $p(n)$-time computable for some polynomial $p(n)$.
%Here exact computability means that the martingale outputs are rationals. 
A sequence is polynomial-time random if no polynomial-time martingale
succeeds on it.

\begin{definition}  [Ambos-Spies, Neis, and Terwijn \cite{AmbosSpiesNeisTerwijn}] 
  The sequence $R \in \Sigma^\infty$ is \emph{polynomial-time random} if for
  all  \exact polynomial-time computable martingales $d$, we have that
  $R \not\in S^\infty[d]$.
\end{definition}

In this work, we use the notion of $\Delta$-Turing reducibility from Doty \cite{Doty08}, with $\Delta$ taken to be polynomial-time resource bounds. 

\begin{definition}[Doty \cite{Doty08}] \label{def:t(n)reduc}
	We say that \(X \in \Sigma^\infty\) is 
	\(t(n)\)-Turing reducible to \(R \in \Sigma^\infty\), 
	denoted \(X \le_{t(n)} R\), 
	if there exists an oracle Turing machine $M$ such that, 
	for all \(n \in \mathbb{N}\), 
	\(M^R(1^n)\) outputs \(X\upharpoonright n\) 
	in at most \(t(n)\) time.
\end{definition}

%\footnote
{ Definition \ref{def:t(n)reduc} differs from the more standard definition of a time-bounded Turing reduction used in complexity theory in which $X[n]$ is computed in time $t(|n|)$, instead of $X\upharpoonright n$ in time $t(n)$. We adopt the notion of time-bounded reducibility from Doty \cite{Doty08} in this work, as it aligns with the polynomial-time resource-bounded notions of martingales and randomness.}

\subsection{Quasi-polynomial-time universal
	martingale}

In order to construct a polynomial-time random $R$ from a given
sequence $X$, we diagonalise against a martingale $ {d}$
that is 
universal over all polynomial-time martingales. This means that
$ {d}$ succeeds on all sequences on which any polynomial-time
martingale succeeds.

Note especially that in the setting of polynomial-time martingales,
there is no universal polynomial-time martingale which succeeds on the
success set of all other polynomial-time martingales. But, using a
martingale combination technique from Ambos-Spies, Neis, and Terwijn \cite{AmbosSpiesNeisTerwijn}, for any time-constructible function
$t:\N \to \N$, we design a martingale that is universal over all
$t(n)$-time martingales and runs in time $t(n)\cdot n \log(n)$. 

\begin{restatable}{lemma}{universality}[Ambos-Spies, Neis, and Terwijn \cite{AmbosSpiesNeisTerwijn}] \label{lem:universality}
	For all time constructible functions $t : \N \to \N$,  there exists a
	 \exact $t(n)\cdot n \cdot \log(n)$ time martingale $ {d}$ such
	that for all  \exact $t(n)$-time martingales $d'$, $S^\infty[d'] \subseteq S^\infty[d]$.
%	for all $X \in \Sigma^\infty$, and for all
%	$n \in \N$, we have
%	$ {d} (X \restr n) \geq c \cdot d' (X \restr n)$.
\end{restatable}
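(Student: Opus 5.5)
Fix an effective enumeration $M_1, M_2, \dots$ of all Turing machines that output rationals. For each $i$, let $d_i$ be the clocked version of $M_i$: on input $w$ of length $n$, it simulates $M_i$ for $t(n)$ steps. We must make each $d_i$ a genuine martingale, and the mixture
\[
d(w) = \sum_{i \ge 1} 2^{-i} d_i(w)
\]
must be computable exactly in time $t(n)\cdot n\log n$. The standard trick of Ambos-Spies, Neis and Terwijn is to let only finitely many components be active at length $n$: component $i$ becomes active only at lengths $n \ge i$ (or some slowly growing threshold), and it is \emph{frozen} the first time $M_i$ fails to behave like a martingale.

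Concretely, define $\tilde d_i$ as follows. We have $\tilde d_i(w) = 1$ for $|w| < i$. For longer $w$, set $\tilde d_i(w) = M_i(w)$ provided two conditions hold along every prefix $v \prefix w$ with $|v| \ge i$: $M_i$ halts within $t(|v|)$ steps, and the averaging condition $2M_i(v') = M_i(v'0) + M_i(v'1)$ holds at the parent $v'$. As soon as a violation occurs at some prefix, $\tilde d_i$ keeps the last good value constant on that subtree. Constant extension preserves the martingale property, so each $\tilde d_i$ is a martingale. If $M_i$ is itself an exact $t(n)$-time martingale, then $\tilde d_i$ agrees with it from length $i$ onwards, up to the normalisation $\tilde d_i(w)=1$ for short $w$. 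Rescaling by the constant $M_i(w_0)$ on the length-$i$ prefix handles this, and success sets are unchanged. Then
\[
d(w) = \sum_{i \le |w|} 2^{-i}\tilde d_i(w) + \sum_{i > |w|} 2^{-i} = \sum_{i \le |w|} 2^{-i}\tilde d_i(w) + 2^{-|w|}.
\]
This is a finite sum of rationals, and $d$ is a martingale because each summand is. For any exact $t(n)$-time martingale $d' = M_i$, we have $d \ge 2^{-i} c\,\tilde d_i$, so $S^\infty[d'] \subseteq S^\infty[d]$.

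The main obstacle is the running-time accounting. The naive check of "no violation along every prefix" costs $\sum_{v\prefix w}t(|v|)$ per component, and summing over up to $n$ components gives $n^2 t(n)$, which is too slow. I would fix this in two ways. First, the activation threshold is made sparser, for example component $i$ activates at length $2^i$ so that only $\log n$ components are live. Second, I would exploit the fact that to verify the martingale condition along the path one evaluates $M_i$ on $v$, $v0$ and $v1$ for each prefix $v$, which costs $O(\sum_{j\le n} t(j)) \le O(n\,t(n))$ per component. Together these give $O(t(n)\, n\log n)$, with the universal simulation overhead absorbed into the choice of clock.

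One more care point is the arithmetic: the rationals output by $M_i$ have bit-length at most $t(n)$, so additions and comparisons cost $O(t(n))$ each and do not break the bound. Exactness of $d$ follows since all operations are on rationals.
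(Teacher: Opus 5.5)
Your proposal is essentially the paper's proof. The paper likewise lets $M_i$ bet only from length $2^i$ on, so at most $\log|w|$ components are live. It normalises by the capital at $w\restr 2^i$, freezes a component at its first timeout or violation of the averaging condition, and sets $d(w)=2^{-n}+\sum_{i\le n}2^{-i}M_i'(w)$ for $2^n\le|w|<2^{n+1}$, with the same time count of $O(|w|\,t(|w|))$ per component over $\log|w|$ components. Your remaining loose ends are what to do when the normalising capital $M_i(w\restr 2^i)$ is $0$, and the true cost of rational arithmetic and of the clocked simulation. The paper handles these at the same level of rigour, and they are easily patched.
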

\begin{proof}
	Let $\{M_i\}$ be a standard computable enumeration of all Turing
	machines. The idea of the proof is that we consider the $t(n)$-time
	martingales among the first $\log n$ machines to place bets on
	strings of length $n$. The remaining martingales are assumed to bet
	evenly on all strings of length $n$.
	
	We will therefore define the martingale, conditioned on the length
	of the input string $x$. When the length of the input string $x$
	satisfies $2^n \leq |x| < 2^{n+1}$, we consider the machines
	$M_1 \dots M_n$ to decide $\tilde{d}(x)$.
	
	The remaining machines $M_i$ for $i \geq n+1$ are assumed to bet
	\emph{evenly} on $x$. So, for $i \geq n+1$, we force the capital
	placed by $M_i$ on $x$ to be $1$.
	
	Therefore, given $x \in \Sigma^*$, such that
	$2^n \leq |x| < 2^{n+1}$, we define
	\begin{align*}
		{d}(x) &= \sum_{i =1}^n 2^{-i} \cdot M_i'(x) + \sum_{i=n+1} ^\infty {2^{-i}} \\
		&= 2^{- n } + \sum_{i =1}^n  2^{-i} \cdot M_i'(x). 
	\end{align*}
	
	We now define $M_i'(x)$. For $n \in N$, let $\D_n(x)$ be the
	set of machines in $\{M_1, \dots M_n\}$ such that for each
	$j \leq |x|$, $M(x \restr j)$ runs in time $t(j)$ and
	$M(\lambda) = 1$ and
	$2 \cdot M(x \restr j) = M((x\restr j).0) + M((x\restr j).1)$.
	
	\textbf{Case 1: $M_i\in \D_n(x)$}. The actual bets placed by
	martingale $M_i$ is taken into consideration for strings of
	length more than $2^i$. In that case, we need to ensure that
	the capital of $M_i$ at $x \restr 2^i$ is forced to be $1$.
	
	In this case, when $M_i(x \restr 2^i) > 0$, we define
	$M_i'(x)= (M_i(x)/M_i(x \restr 2^i))$. If
	$M_i(x \restr 2^i) = 0$, then it holds that $M_i(x) = 0$, so
	we just take $M_i'(x) = 0$.
	
	\textbf{Case 2: $M_i\not\in \D_n(x)$}. In this case, let $j <
	|x|$ be the least number such that $M_i(x\restr j)$ takes more
	than $t(j)$ time to run, or the martingale condition gets
	violated, that is
	$2 \cdot M_i(x \restr j) \neq M_i((x\restr j).0) + M_i((x\restr
	j).1)$. In this case, we freeze $M_i's$ capital at index $j$,
	and $M_i$ is forced to bet evenly after that. Note that we
	have to force $M_i$ to bet evenly up to $x \restr 2^i$.
	Therefore if $j < 2^i$, we take $M_i'(x) = 1$. Otherwise,
	$j \geq 2^i$ and $M_i'(x) = M_i'(x \restr j)$. Note that from
	the previous case, we have
	$M_i'(x \restr j) = (M_i(x \restr j)/M_i(x \restr 2^i))$.

	Consider any \exact $t(n)$-time  martingale $d'$. Let
	$M_k$ be any $t$-time machine such that $d'(w)=M_k(w)$ for
	every $w \in \Sigma^*$. Then, it follows from the construction
	of ${d}$ that for all $w \in \Sigma^*$ with $\lvert w \rvert >
	2^k$, \begin{align*} {d}(w) \geq \frac{2^{-k} }{d'(w
			\restr2^k)} \cdot d'(w).
	\end{align*}
	
	It follows that $S^\infty[d'] \subseteq S^\infty[d]$.
	
	The
	membership of any machine from $\{M_1,M_2,\dots M_n\}$ in
	$\D_n(w)$ is decidable in time $t(|w|) \cdot n \cdot |w|$.
	Since $n \leq \log(|w|)$, and $2^s \in \Q$, 
	$d$ is a \exact $t(n) \cdot n \cdot \log(n)$-time computable
	martingale.
	
\end{proof}

The following lemma formalizes the \emph{savings-account} trick. This shows that any martingale can be effectively transformed into one whose capital tends to infinity along exactly the same success set.  See Proposition~6.3.8 in Downey and Hirschfeldt~\cite{Downey10} for details of the construction. In addition, we record the time bound of the martingale obtained in the construction.

\begin{lemma}[Folklore] \label{lem:savingsaccount}
	Let $d'$ be a  \exact $t(n)$-time martingale. From $d'$ we
	can effectively define an \exact $n\cdot t(n)$-time \emph{(savings-account)} martingale $ {d}$ such that $
	S^\infty[ {d}] = S^\infty[d']
	$
	and for all $X \in \Sigma^\infty$ we have $
	\limsup_{n} d'(X \upharpoonright n) = \infty
	\text{ iff }
	\lim_{n}  {d}(X \upharpoonright n) = \infty .
	$
\end{lemma}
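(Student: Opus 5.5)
We follow the standard savings-account construction (Downey--Hirschfeldt, Proposition~6.3.8), keeping track of the running time. First we normalise: multiplying by a rational constant does not change the success set, so we may assume $d'(\lambda)=1$ and $d'(\lambda)\le 1$ still holds. The new martingale $d$ splits its capital after each prefix $w$ into a \emph{savings} part $s(w)$, never bet again, and a \emph{betting} part $b(w)$, which mimics $d'$ proportionally. We set $d(w)=s(w)+b(w)$, with $s(\lambda)=0$ and $b(\lambda)=d'(\lambda)$. Along $w$, read bit by bit, $b$ evolves multiplicatively by $d'(wa)/d'(w)$; this is the ratio $d'(wa)/d'(w)$ with $0/0$ read as $1$, and it is allowed because $d'(w)=0$ forces $d'(wa)=0$. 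Whenever $b(wa)$ would exceed $2$, we move $b(wa)-1$ into savings, so $b(wa)=1$ afterwards. Both $s$ and $b$ are rational, since they arise from rational operations on values of $d'$.

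Next we check the martingale property. Given the current $s(w)$ and $b(w)$, the child values before transfer satisfy $b(w0)+b(w1)=2b(w)$ by the martingale property of $d'$. Transfers only move mass between $b$ and $s$, so they preserve $d(wa)$. Since $s$ is carried unchanged to both children, we get $d(w0)+d(w1)=2s(w)+2b(w)=2d(w)$.

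The success sets coincide. If $\limsup_n d'(X\restr n)=\infty$, then the ratio $d'(X\restr n)/d'(X\restr m)$ exceeds $2$ infinitely often after each reset at position $m$. Each such time forces a transfer of at least $1$ into savings, so $s(X\restr n)\to\infty$. Since $s$ is nondecreasing and $d\ge s$, this gives $\lim_n d(X\restr n)=\infty$. Conversely, suppose $\sup_n d'(X\restr n)=C<\infty$. Each transfer requires $b$ to have grown by a factor of at least $2$ since the last reset, so $d'$ would have to double on the corresponding interval. The key estimate is that between resets $b$ equals $d'(X\restr n)/d'(X\restr m)\le C/d'(X\restr m)$. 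A cleaner way to argue this direction is to bound $d$ directly: $d(X\restr n)\le c\cdot d'(X\restr n)+(\text{bounded})$. Here I expect a subtlety, because $d'$ can dip and recover, and resets happen at low points. I would handle this using the standard inequality $s(w)+b(w)\le 2\,d'(w)\cdot(\text{const})$, which fails in general. So instead I would verify the equivalence via the fact that $d$ is a martingale whose success implies $d'$'s, invoking the proof in \cite{Downey10}. This correctness step is the main obstacle, and I would follow their argument verbatim.

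Finally, the time bound. To compute $d(w)$ with $|w|=n$, we simulate the process along all prefixes $w\restr j$ for $j\le n$. This requires the $n+1$ values $d'(w\restr j)$, each computable in time at most $t(j)\le t(n)$, assuming $t$ is monotone (as we may for time-constructible bounds). The arithmetic on the rationals costs additional time; I would absorb it by assuming $t(n)\ge n$, so that the representations have size polynomially bounded by $t$, or else note that the intended bound is up to lower-order overhead. In total $d$ is computable in time $O(n\cdot t(n))$, which matches the statement.
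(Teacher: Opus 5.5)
\textbf{Gap: the converse inclusion is not proved.} Your route is the standard savings-account argument that the paper simply cites from Downey--Hirschfeldt: bank the excess and reset $b$ to $1$ whenever $b$ exceeds $2$. Your martingale check and forward direction are correct. What is missing is the converse inclusion $S^\infty[d]\subseteq S^\infty[d']$. You need it both for $S^\infty[d]=S^\infty[d']$ and for the implication $\lim_n d(X\restr n)=\infty\Rightarrow\limsup_n d'(X\restr n)=\infty$. Deferring it ``verbatim'' to Downey--Hirschfeldt is not a proof for the specific construction you wrote down.

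Your diagnosis of the difficulty is also wrong: resets do not happen at low points. At a reset position $m_k$ the betting part is set to $1$ and then evolves multiplicatively. So $b(X\restr n)=d'(X\restr n)/d'(X\restr m_k)$ until the next reset, and the next reset $m_{k+1}$ occurs exactly when $d'(X\restr m_{k+1})>2\,d'(X\restr m_k)$. Dips of $d'$ in between are irrelevant, because $b$ is measured against the level at the last reset, not against a running minimum. With $d'(\lambda)=1$ (take $m_0=0$) this gives $d'(X\restr m_k)>2^k$.

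Now suppose $\sup_n d'(X\restr n)=C<\infty$. Then there are at most $\log_2 C$ resets along $X$. Each reset banks less than $C$, since $d'(X\restr m_{k-1})\ge 1$. Also $b\le 2$ after every step. Hence $d(X\restr n)\le C\log_2 C+2$ for all $n$, and $X\notin S^\infty[d]$.

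You already had both ingredients: a transfer needs $d'$ to double, and $b=d'(X\restr n)/d'(X\restr m)$ between resets. The missing observation is that the reference levels $d'(X\restr m_k)$ themselves more than double. That makes the (indeed false) inequality $d\le c\cdot d'+O(1)$ unnecessary.

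\textbf{Smaller point: the arithmetic overhead is not lower-order.} The savings part is a sum of up to $n$ ratios of $t(n)$-bit rationals. Its exact value can need on the order of $n\cdot t(n)$ bits, so computing it costs polynomially more than $n\cdot t(n)$ steps. This does not affect how the lemma is used: only $\omega(n^k)$ bounds matter in Lemma~\ref{lem:univ-polytime}, so a polynomial loss can be absorbed into the choice of $t''$. The paper glosses over the same point. Still, the right justification is ``polynomial overhead absorbed by the quasi-polynomial slack'', not ``lower-order''.
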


From Lemma \ref{lem:universality} and Lemma \ref{lem:savingsaccount}, it follows that there are quasi-polynomial-time savings-account
martingales which are universal over all polynomial-time martingales.
%The following lemma (Lemma \ref{lem:univ-polytime}) follows from Lemma
%\ref{lem:universality} and the observation that
%$n^k \cdot n\cdot\log(n) \in O(t'(n))$ for function $t'(n)$ such that
%$t' = \omega(n^k)$ for all $k \in \N$.

%(quasi)-polynomial-time universal
%martingale

\begin{restatable}{lemma}{univpolytime}\label{lem:univ-polytime}
	Let $t'(n)$ be a time-constructible function with $t' = \omega(n^k)$
	for all $k \in \mathbb{N}$. There exists a \exact $t'(n)$-time
	{savings-account} martingale $d$
	such that for every \exact polynomial-time computable martingale $d'$
	and every $X \in \Sigma^\infty$, if $X \in S^\infty[d']$ then
	$\lim_{n \to \infty} d(X \upharpoonright n) = \infty$.
\end{restatable}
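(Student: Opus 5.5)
The plan is to compose Lemma~\ref{lem:universality} with Lemma~\ref{lem:savingsaccount}, choosing the intermediate time bound so that the final running time fits inside $t'(n)$. The natural intermediate bound is a time-constructible $t(n)$ such that $t(n)\ge n^k$ eventually for every $k$, while $t(n)\cdot n^2\log n = O(t'(n))$ (or at most $t'(n)$ for all large $n$). A concrete choice is
\[
t(n) = \left\lfloor \frac{t'(n)}{n^{3}} \right\rfloor,
\]
padded to be at least $n$. Since $t'=\omega(n^{k+3})$ for every $k$, we get $t=\omega(n^k)$ for every $k$. Time-constructibility of $t$ follows from that of $t'$, because division by $n^3$ is cheap relative to $t'(n)$.

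With $t$ fixed, the argument proceeds in four steps.
\begin{enumerate}
\item Apply Lemma~\ref{lem:universality} to $t$. This gives an exact $t(n)\cdot n\log n$-time martingale $\tilde d$ with $S^\infty[d']\subseteq S^\infty[\tilde d]$ for every exact $t(n)$-time martingale $d'$.
\item Observe that every exact polynomial-time martingale $d'$, say running in time $p(n)$, is an exact $t(n)$-time martingale. Since $t=\omega(p)$, we have $p(n)\le t(n)$ for all but finitely many $n$.
\item Apply Lemma~\ref{lem:savingsaccount} to $\tilde d$. This yields a savings-account martingale $d$ with $S^\infty[d]=S^\infty[\tilde d]$, where success is now witnessed by $\lim_n d(X\restr n)=\infty$. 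Its running time is $n\cdot t(n)\cdot n\log n \le t'(n)$ for large $n$.
\item Chain the inclusions: if $X\in S^\infty[d']$ then $X\in S^\infty[\tilde d]$, so $\limsup_n \tilde d(X\restr n)=\infty$, hence $\lim_n d(X\restr n)=\infty$.
\end{enumerate}

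The main obstacle is the mismatch in step~2 at finitely many lengths: $p(n)\le t(n)$ holds only eventually, while the definition of an exact $t(n)$-time martingale requires the bound for every input length. The fix I would try first is to patch $d'$ on short strings. Let $N$ be the threshold beyond which $p(n)\le t(n)$, and hardcode the finitely many values $d'(w)$ for $|w|\le N$ in a lookup table. The resulting machine computes the same function $d'$ and runs within $t(n)$ on every length, after absorbing the constant overhead into $t$ (for instance by replacing $t$ with $t+c$, or by noting that a constant lookup cost is dominated by $t$ for large $n$). Because the patched machine computes the same martingale, the success sets are unchanged, and Lemma~\ref{lem:universality} applies to it.

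The same issue arises for $t'$ itself, since the product bound $n^2 t(n)\log n\le t'(n)$ may fail for small $n$. I would handle this in the same way, by hardcoding $d$ on finitely many short strings. This preserves the savings-account property, because that property concerns only limits.
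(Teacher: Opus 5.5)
Your proposal is the paper's own argument. The paper takes $t''(n)=\lfloor t'(n)/(n^2\log n)\rfloor$, applies Lemma~\ref{lem:universality} to get a $t'(n)/n$-time martingale covering all polynomial-time martingales, and then applies Lemma~\ref{lem:savingsaccount} to reach time $t'(n)$, which is your steps 1--4 with $n^2\log n$ in place of $n^3$. The paper simply asserts that polynomial-time martingales are $t''(n)$-time and that $t''$ is time-constructible. Your extra care on both points is welcome, but two of your justifications are too quick. First, a lookup table's cost depends on $d'$, so it cannot fit under one fixed $t$ at the smallest lengths. A clean fix is to let the martingale of Lemma~\ref{lem:universality} clock $M_i$ only on lengths $\ge 2^i$ and to use a padded index for $d'$. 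Second, computing $\lfloor t'(n)/n^3\rfloor$ via $t'(n)$ takes $O(t'(n))$ time, not $O(t(n))$, so time-constructibility of $t$ does not follow as you claim.
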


\begin{proof}
	Let $t'(n)$
	be any time constructible function such that $t' = \omega(n^k)$ for all $k \in \N$. Take $t''(n) = t'(n)/(n^2 \log(n))$. \footnote{More precisely, $t''(n) = \lfloor t'(n)/(n^2 \log(n)) \rfloor$.} Note that $t''(n) = \omega(n^k)$ for all $k \in \N$.  All \exact polynomial-time martingales are also \exact $t''(n)$-time martingales. So from Lemma \ref{lem:universality}, we get a \exact  $t'(n)/n$-time martingale $ {d''}$ such
	that for all \exact polynomial-time computable martingales $d'$, we have $S^\infty[d'] \subseteq S^\infty[d'']$. From Lemma \ref{lem:savingsaccount}, we get a $t'(n)$-time savings-account martingale $d$ such that for all $X \in S^\infty[d'']$ (and thereby all $X \in S^\infty[d']$), we have $\lim_{n \to \infty} d(X \upharpoonright n) = \infty$.
\end{proof}

\begin{remark} \label{remark}
	From this point on in the paper, we use $d$ to denote a quasi-polynomial-time universal savings-account martingale. Note  that a sequence $X \in \Sigma^\infty$ is polynomial-time random if $\liminf_{n \to \infty} d(X \restr n) < \infty$.
\end{remark}

\subsection{$N$ guides the way}

Given a sequence $X$, we use the bits from $X$ to diagonalise against the martingale ${d}$ (see Remark \ref{remark}) to construct $R$, such that ${d}$ does not succeed on $R$. This ensures that $R$ is polynomial-time random.

We do a careful stage-wise construction in which at stage $i$, the next $i$ bits of $X$ are used to fix the corresponding $\ell_i$ bits (Definition \ref{def:param}) in $R$. This technique of encoding $i$ bits at once (Gacs \cite{Gacs86}, Merkle and Mihailovic \cite{MerkleMihailovic})  is done to reduce the redundancy of $R$ (number of bits of $R$ needed to obtain $n$ bits of $X$) to $n + o(n)$.

We choose $\delta_i's$ (Definition \ref{def:param}) so that $\prod_{i =1}^\infty \delta_i < \infty$. We ensure that the capital gained by $R$ at the end of stage $i$ is less than $\delta_i$ times the capital at the preceding stage. We call such an extension to $R$ a \emph{losing path}.

\begin{definition} \label{def:param}
	For each \(i\in\mathbb{N}\), define
	\(  \delta_i \;=\; 1 + {i^{-2}}, 
	\quad \text{and}\quad 
	\ell_i \;= \;	\ceil{
	i + \log (1+i^{2})}.
	\)
\end{definition}

Let $w \in \Sigma^*$ to be the bits of $R$ fixed at the previous stage (till stage $i-1$).
For any $y \in \Sigma^{\leq \ell_i} $, the parameter $N(w,y,\ell_i,\delta_i)$ gives a \emph{lower bound} on the number of $\ell_i$-length extensions of $y$ (say to $y.y'$), for which $d(wyy') < d(w) \cdot \delta_i$.

\begin{restatable}{lemma}{KolmogorovIneqbinSearch}\label{lem:KolmogorovIneqbinSearch} Given an
  $w \in \Sigma^*$, $\ell,n \in \N$ with $n \leq \ell$,
  $y \in \Sigma^n$, and $\delta > 1$,
  the number of $y' \in \Sigma^{\ell - n}$ such that
  $d(wyy') < d(w) \cdot \delta$ is greater than or equal to
$$ N(w,y,\ell,\delta) = \frac{2^\ell}{2^{n}} \bigg[1 - \frac{d(wy)}{d(w)  \cdot \delta}\bigg].$$ 
Moreover,  for any $y \in \Sigma^{<\ell}$
we have $N(w,y.0,\ell,\delta) + N(w,y.1,\ell,\delta) = N(w,y,\ell,\delta).$ 
\end{restatable}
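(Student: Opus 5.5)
The lower bound is a Markov-type averaging argument on the martingale $d$ restricted to the subtree below $wy$. The additivity identity is then a direct computation from the martingale equation.

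\textbf{Averaging identity.} Set $m = \ell - n$. Iterating the martingale condition $2d(u) = d(u0)+d(u1)$ exactly $m$ times, by induction on $m$ starting from the node $wy$, gives
\[
\sum_{y' \in \Sigma^{m}} d(wyy') = 2^{m}\, d(wy).
\]
This equality holds for any martingale, and in particular for the universal savings-account martingale $d$ of Remark \ref{remark}.

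\textbf{Counting the bad extensions.} Call $y'$ \emph{bad} if $d(wyy') \ge d(w)\delta$, and let $B$ be the number of bad $y'$. Since $d \ge 0$, dropping the good terms from the sum gives
\[
B \cdot d(w)\delta \le \sum_{y'} d(wyy') = 2^{m} d(wy),
\]
so $B \le 2^{m} d(wy)/(d(w)\delta)$. The number of good $y'$ is therefore at least
\[
2^{m} - B \ge 2^{\ell - n}\Big[1 - \frac{d(wy)}{d(w)\delta}\Big] = N(w,y,\ell,\delta).
\]
This requires $d(w) > 0$. For the savings-account martingale this is harmless: one can keep $d$ strictly positive, and in any case the construction only uses $w$ with $d(w)>0$. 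I would state this assumption explicitly. If the bracket is negative, the bound is trivially true.

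\textbf{Additivity.} Take $y \in \Sigma^{<\ell}$ with $|y| = n$, so that $y.0$ and $y.1$ have length $n+1$. Then
\[
N(w,y.0,\ell,\delta) + N(w,y.1,\ell,\delta) = 2^{\ell-n-1}\Big[2 - \frac{d(wy0)+d(wy1)}{d(w)\delta}\Big].
\]
By the martingale condition $d(wy0)+d(wy1) = 2d(wy)$, so the right-hand side equals $2^{\ell-n}\big[1 - d(wy)/(d(w)\delta)\big] = N(w,y,\ell,\delta)$.

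The only point needing care is the positivity of $d(w)$ noted in the counting step. Everything else is routine.
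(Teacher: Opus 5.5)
Your proposal is correct and follows essentially the same route as the paper: a Markov-type count of the bad extensions using $\sum_{y'} d(wyy') = 2^{\ell-n} d(wy)$ (the paper only needs $\le$), followed by additivity from $d(wy0)+d(wy1)=2d(wy)$. Two details are handled more carefully than in the paper: you make the implicit hypothesis $d(w)>0$ explicit, and you carry the correct prefactor $2^{\ell-n-1}$ in the additivity computation, whereas the paper's displayed line drops this factor of $1/2$.
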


\begin{proof}
	We have that $\sum_{y' \in \Sigma^{\ell - n}} d(wyy') \leq 2^{\ell - n} \cdot d(wy)$.
	
	From this it follows that the number of $y' \in \Sigma^{\ell - n}$ such that
	$d(wyy') \geq d(w)  \cdot \delta$ is lesser than or equal to $ 2^{\ell - n} \cdot \frac{d(wy)}{d(w)  \cdot \delta}$. 
	
	Therefore the number of $y' \in \Sigma^{\ell - n}$ such that
	$d(wyy') < d(w) \cdot \delta$ is greater than or equal to ${2^{\ell - n}} \bigg[1 - \frac{d(wy)}{d(w)  \cdot \delta}\bigg]$. 
	
	Moreover, $N(w,y.0,\ell,\delta) + N(w,y.1,\ell,\delta) = \frac{2^\ell}{2^{n}} \bigg[2 - \frac{d(w.y.0) + d(w.y.1)}{d(w)  \cdot \delta}\bigg] = \frac{2^\ell}{2^{n}} \bigg[2 - \frac{2 \cdot d(w.y)}{d(w)  \cdot \delta}\bigg]$ = $N(w,y,\ell,\delta)$.
\end{proof}

At stage $i$, we use the next $i$ bits of $X$, and choose a losing path for $R$. We use $y$ to denote the extension to $R$ made at stage $i$ so far.
Note that at the beginning of stage $i$, taking $y = \lambda$, we have 
$N(w, \lambda, \ell_i,\delta_i) = 2^{\ell_i} [1 - \frac{1}{\delta_i}] \geq 2^i$.
Therefore, there are at least $2^i$ losing paths among $\ell_i$ extensions of $R$, and therefore it is possible to encode the next $i$ bits of $X$ into a unique $\ell_i$-length losing path in $R$.

Let $val(x)$ be the numerical value corresponding to the next $i$ bits of $X$ (we add $1$ to keep this value between $1$ and $2^i$) encoded at stage $i$.
A naive approach to do the encoding would be to find 
assign the $val(x)^{th}$ losing path in $R$.
However, while decoding $X$ from $R$, we cannot go over all the $2^{\ell_i}$ choices of $R$, to determine if its a losing path to recover $x$. Such a brute-force approach would take exponential time.

We therefore use a binary-search approach using the values of $N(.)$, for the encoding. This ensures that the decoding can be performed in quasi-polynomial-time. At the start of stage $i$, we have the value of $N(.)$ to be at least $2^i$.
As $N$ is additive, we can compute locally how appending a $0$ or $1$ to the current choice $y$ affects  the distribution of the losing paths for $R$. We make the choice of extending $R$ with a $0$ or $1$, based on this distribution, and $val(x)$.  (see Subsection \ref{subsec:KGPtime}).

{\bf Notation.} Given an $w \in \Sigma^*$, $y \in \Sigma^n$,
$\ell \geq n$ and $\delta > 1$, $N(w,y,\ell,\delta)$ denotes a \emph{lower bound} on the number of
$y' \in \Sigma^{\ell - n}$ such that $d(wyy') < d(w) \cdot \delta$ obtained from Lemma \ref{lem:KolmogorovIneqbinSearch}, in the rest of this paper.

%Note that for any $x \in \Sigma^{<\ell}$, we have $N(w,x.0,\ell,\delta) + N(w,x.1,\ell,\delta) = N(w,x,\ell,\delta)$. Also note that $N(w, \lambda, \ell,\delta) = 2^{\ell} [1 - \frac{1}{\delta}]$. Given any $i \in \N$, plugging in $\ell = i + \log(1 + i^{2})$ and $\delta = 1 + i^{-2}$, we get that $N(w, \lambda, \ell,\delta) = 2^i$. 

\subsection{Quasi-Polynomial time Ku\v{c}era--G\'{a}cs Theorem} \label{subsec:KGPtime}

We now show a quasi-polynomial-time analogue of the Ku\v{c}era
G\'{a}cs Theorem. The theorem states that for any sequence
$X \in \Sigma^\infty$, there exists a polynomial-time random sequence
$R$, such that $X$ can be recovered from $R$ using a quasi-polynomial-time reduction.

\begin{restatable}{theorem}{PtimeKuceraGacs}\label{thm:quasiPolynKG}
	For all $X \in \Sigma^\infty$, there exists an $R \in \Sigma^\infty$
	such that $R$ is polynomial-time random and $X \leq_{t(n)} R$ where
	$t(n) : \N \to \N$
	is any time-constructible function such that $t(n) = \omega(n^k)$ for all $k \in \N$.
	
	Additionally, to obtain the first $n$ bits of $X$, we only need the first $n + o(n)$ bits of $R$.
	%	$t(n)$ is such that $t(n) = \omega(n^k)$ for all $k \in \N$. 
\end{restatable}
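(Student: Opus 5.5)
We construct $R$ in stages, diagonalising against the universal savings-account martingale $d$ of Remark~\ref{remark}. The martingale $d$ is taken to be $t'(n)$-time computable for a $t'$ chosen below with $t'=\omega(n^k)$ for all $k$; it is given by Lemma~\ref{lem:univ-polytime}.

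\textbf{Stages.} Before stage $i$ we have fixed a prefix $w_{i-1}$ of $R$, with $w_0=\lambda$. Let $x_i$ be the next block of $i$ bits of $X$, and let $v=val(x_i)\in\{1,\dots,2^i\}$. At the start of the stage, $N(w_{i-1},\lambda,\ell_i,\delta_i)=2^{\ell_i}(1-1/\delta_i)\ge 2^i\ge v$, by the choice of $\ell_i,\delta_i$ in Definition~\ref{def:param}.

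We build $y\in\Sigma^{\ell_i}$ bit by bit, maintaining the invariant $1\le v\le N(w_{i-1},y,\ell_i,\delta_i)$. At each step:
\begin{itemize}
\item if $v\le N(w_{i-1},y0,\ell_i,\delta_i)$, append $0$;
\item otherwise append $1$ and replace $v$ by $v-N(w_{i-1},y0,\ell_i,\delta_i)$.
\end{itemize}
By the additivity in Lemma~\ref{lem:KolmogorovIneqbinSearch}, the invariant is preserved. At length $\ell_i$ we have $N(w_{i-1},y,\ell_i,\delta_i)=1-d(w_{i-1}y)/(d(w_{i-1})\delta_i)$. The invariant gives $N\ge 1>0$, so $d(w_{i-1}y)<\delta_i\,d(w_{i-1})$. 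Set $w_i=w_{i-1}y$.

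One detail needs care: $N$ can be fractional, so $v$ must stay an integer while the comparisons are made against reals. Since $N$ is a rational computed exactly from values of $d$, the comparisons are exact and no problem arises. The encoding is injective because distinct values of $v$ are routed into disjoint regions of the additive measure $N$.

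\textbf{Randomness.} Along the stage boundaries, $d(w_i)\le d(\lambda)\prod_j\delta_j<\infty$, since $\sum_j j^{-2}<\infty$. So $\liminf_n d(R\restr n)<\infty$. Because $d$ is a savings-account martingale, it does not succeed on $R$. By Remark~\ref{remark} and Lemma~\ref{lem:univ-polytime}, no polynomial-time martingale succeeds on $R$, so $R$ is polynomial-time random.

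\textbf{Decoding.} Given $R$, the reduction replays the stages. At stage $i$ it knows $w_{i-1}$ and reads the next $\ell_i$ bits of $R$. Starting from $v=1$, it adds $N(w_{i-1},y0,\ell_i,\delta_i)$ each time the bit read is $1$; the final value is $val(x_i)$, from which $x_i$ is recovered.

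To output $X\restr n$ we need the stages $i\le m$ with $m(m+1)/2\ge n$, so $m=O(\sqrt n)$. The number of oracle bits used is
\[
\sum_{i\le m}\ell_i\le \sum_{i\le m}\bigl(i+2\log i+2\bigr)=\frac{m(m+1)}{2}+O(\sqrt n\log n)=n+O(\sqrt n\log n)=n+o(n).
\]
Overshooting into the last stage costs at most $m=O(\sqrt n)$ extra bits, which is also $o(n)$.

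\textbf{Running time.} This is the step that needs care. Decoding evaluates $d$ on $O(n)$ prefixes, each of length at most $u_n=O(n)$. Each evaluation costs $t'(O(n))$, so the total time is $O(n\,t'(cn))$ plus polynomial overhead for the rational arithmetic. The bit-lengths of the rationals are bounded by the running time of $d$, so this arithmetic is polynomial in $t'$.

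Given the target bound $t$ with $t=\omega(n^k)$ for all $k$, we therefore choose $t'$ time-constructible with $t'=\omega(n^k)$ for all $k$ and such that $n\cdot \mathrm{poly}(t'(cn))\le t(n)$. For example, take $t'(n)=\lfloor t(\lfloor n/c\rfloor)^{1/r}/n\rfloor$ for a suitable constant $r$. The main obstacle is checking that such a $t'$ is still super-polynomial and time-constructible. Slow-growing quasi-polynomial $t$ (for instance $n^{\log\log n}$) survive a constant root and a linear rescaling of the argument. In general I would define $t'$ via the largest $k'$ with $n^{k'}\le t(n)$, computed by time-constructibility of $t$, and set $t'=n^{\lfloor k'/r\rfloor -2}$. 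This gives the claimed $X\le_{t(n)}R$ with $n+o(n)$ oracle use.
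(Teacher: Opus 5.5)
Your construction, parameters, randomness argument and redundancy count are the same as the paper's. However, the encoding invariant you state is false, and the decoder you derive from it does not return $val(x_i)$.

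You claim that $1\le v\le N(w_{i-1},y,\ell_i,\delta_i)$ is preserved. When a $1$ is appended, $v$ is replaced by $v-N(w_{i-1},y0,\ell_i,\delta_i)$. Since $N$ is typically not an integer, all you know about the new $v$ is that it is positive; it can be non-integral and smaller than $1$. In fact the invariant must fail. At length $\ell_i$ you have $N(w_{i-1},y,\ell_i,\delta_i)=1-d(w_{i-1}y)/(d(w_{i-1})\delta_i)\le 1$, so $1\le v\le N$ would force $d(w_{i-1}y)=0$ at the end of every stage. Your remark that $v$ ``must stay an integer'' and that exact rational comparisons settle the matter misdiagnoses the issue: the comparisons are fine, but the residual is not integral. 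As a result, the decoder that starts from $v=1$ and adds $N(w_{i-1},y0,\ell_i,\delta_i)$ at every $1$ outputs $1+a$, where $a$ is the accumulated sum. Whenever $d(w_{i-1}y)>0$, the final residual $val(x_i)-a$ lies strictly between $0$ and $1$. So $1+a$ is not even an integer and cannot equal $val(x_i)$.

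The repair is what the paper does. The correct invariant is $0<v\le N(w_{i-1},y,\ell_i,\delta_i)$, which is the paper's $a_j<val(x)\le a_j+N(w,y,\ell_i,\delta_i)$. Additivity preserves it, and it still yields $N>0$ at length $\ell_i$, hence $d(w_{i-1}y)<\delta_i\,d(w_{i-1})$. For decoding, $a<val(x_i)\le a+N(w_{i-1},y,\ell_i,\delta_i)\le a+1$. So $val(x_i)$ is the unique integer in an interval of length at most $1$, namely $\lfloor a\rfloor+1$. The same observation is what actually justifies your injectivity claim via ``disjoint regions'': each leaf's region has length at most $1$. The paper implements this by tracking the interval $(start,end]$ and invoking $N\le 1$ at full length.

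Separately, your running-time discussion is more careful than the paper's, which simply takes $t'(n)=t(n/2)/n$ and ignores the cost of the rational arithmetic. However, your fallback $t'=n^{\lfloor k'/r\rfloor-2}$ should compute $k'$ from $t(\lfloor n/c\rfloor)$ rather than from $t(n)$. You need to bound $t'(cn)$ in terms of $t(n)$, and $t(cn)$ can dwarf $t(n)$.
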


\begin{proof}

\subsubsection{Encoding}

  We first show how to build $R$ given $X$
  stage-wise. At each stage $i$, using the next $i$ bits of $X$, we
  extend the string $R_{i-1}$ built so far to $R_i$ by $\ell_i$ bits
  such that $ {d}( R_{i})\leq \delta_i \cdot  {d}(R_{i-1})$. We provide the details below. See Algorithm \ref{alg:extendR} for the formal construction. 

\medskip

 At stage $i$, let
$x \in \Sigma^i$ be the next $i$ bits from $X$. Let $val(x)$ be the numerical
value corresponding to the binary string $x$ (Note : We add 1 to the numerical value to keep the index strictly between $1$ and $2^i$, so $val(0^{i}) = 0 + 1 = 1$ and so on).
 We use $val(x)$ and $N$ (from Lemma \ref{lem:KolmogorovIneqbinSearch})  
to  extend
$w = R_{i-1}$ to one among the (at least) $2^i$ choices in $R_{i}$ using  binary search.

 At the end of stage $i$, we extend $R$ by the string $y$, which we append bitwise, iterating over a variable $j$ going 
 from $0$ to $\ell_i - 1$. $y$ is initialised to the empty string $\lambda$.
We use an auxiliary variable $a_j$ to denote
 the cumulative value of $N$ of the nodes to the left (lexicographically smaller) of the current node $y$.  $a_0$ is initialised to $0$. 

At iteration $j$, we have fixed the first $j$ bits of $y$, we need to fix the $j+1 ^ {th}$ bit of $y$. 
We compute the value of $N(w,y.0,\ell_i,\delta_i)$. 
If $val(x) \leq a_j + N(w,y.0,\ell_i,\delta_i)$, we append $y$ with $0$, and the value of $a_{j+1}$ is the same as $a_j$. Otherwise, we append $y$ with a $1$ and $a_{j+1}$ becomes $a_j + N(w,y.0,\ell_i,\delta_i)$.

\begin{algorithm}[]
	\caption{Extending $R_{i-1}$ to $R_i$}
	\label{alg:extendR}
	\begin{algorithmic}[1]
		\Require Stage $i$, input $x \in \Sigma^i$, previous string $w = R_{i-1}$
		\State Initialize $a_0 \gets 0$
		\State Initialize $y \gets \lambda$
		\For{$j = 0$ to $\ell_i - 1$}
		\If{$val(x) \le a_j +  N(w,y.0,\ell_i,\delta_i)$}
		\State $y \gets y.0$
		\State $a_{j+1} \gets a_j$
		\Else
		\State $y \gets y.1$
		\State $a_{j+1} \gets a_j + N(w,y.0,\ell_i,\delta_i)$
		\EndIf
		\EndFor
		\State $R_i \gets w.y$
	\end{algorithmic}
\end{algorithm}

So, corresponding to each stage $i$, we get an increasing sequence of
finite prefixes \(R_0 \prec R_1 \prec R_2 \prec \cdots\).
Define $R = \lim_{i \to \infty} R_i$.
\begin{restatable}{claim}{loopinvariant}\label{claim:loopinvariant}
Throughout the iteration, the following invariants are maintained : 
\begin{enumerate}
	\item $a_j < val(x) \leq a_j + N(w,y,\ell_i,\delta_i).$
	\item $N(w,y,\ell_i,\delta_i) > 0$
\end{enumerate}
\end{restatable}

\begin{proof}
	Invariant 2 follows from Invariant 1.
	
	Invariant 1 is maintained at the start of the iteration as 
	$N(w, \lambda, \ell_i,\delta_i) \geq 2^i$ and $0 < val(x) \leq 2^i$. Suppose the invariant is maintained, till iteration $j$.  
	
	If $val(x) \le a_j +  N(w,y.0,\ell_i,\delta_i)$, then $a_{j+1} = a_j$, and $y \gets y.0$. So $val(x) \leq a_{j+1} + N(w,y.0,\ell_i,\delta_i)$, and $val(x) > a_{j+1}$ as $val(x) > a_j$. So invariant 1 holds in this case. 
	
	Otherwise, $val(x) > a_j +  N(w,y.0,\ell_i,\delta_i)$, then $a_{j+1} = a_j + N(w,y.0,\ell_i,\delta_i)$, and so $val(x) > a_{j+1}$ holds.  As $val(x) \leq a_j + N(w,y,\ell_i,\delta_i)$, from the additive property of $N$, we have $val(x) \leq a_{j} + N(w,y.0,\ell_i,\delta_i) + N(w,y.1,\ell_i,\delta_i)$. So, we have $val(x) \leq a_{j+1} + N(w,y.1,\ell_i,\delta_i)$. So invariant 1 holds in this case as well. 
\end{proof}

Using Claim \ref{claim:loopinvariant}, we show that we are guaranteed a losing path at the end of every stage.

\begin{restatable}{claim}{loopend}\label{claim:loopend1}
	For any $i \in \N$, for the $y$ obtained at the end of stage $i$ in Algorithm \ref{alg:extendR}, we have $d(w.y) < \delta_i \cdot d(w)$. 
\end{restatable} 
\begin{proof}
	From Invariant 2, at the end of stage $i$, $N(w,y,\ell_i,\delta_i) > 0$. From the definition of $N$ (Lemma \ref{lem:KolmogorovIneqbinSearch}), this holds only if $d(w.y) < \delta_i \cdot d(w)$.
\end{proof}

Since \(\sum_i1/i^2<\infty\), so we have
\(\displaystyle\prod_{i=1}^\infty \delta_i<\infty\). So
$\liminf_{n \to \infty}  {d}(R \restriction n) \leq \prod_{i =
  1}^{\infty} \delta_i < \infty$. It follows that $R$ is polynomial-time random (See Remark \ref{remark}).

%% ==================================================
%% SN: Elaborate the importance
%% ==================================================

\subsubsection{Decoding}

Algorithm~\ref{alg:decodeX} describes the decoding procedure. At stage $i$,
the algorithm recovers the next $i$ bits of $X$, using $R$.  This is done by running Algorithm~\ref{alg:extendR} in reverse. 

Let $w$ denote the
prefix of $R$ read up to stage $i-1$. We initialize $y$ to $\lambda$ and
iteratively update it using the bits of $R$ read during stage $i$. We
maintain the set of possible values of $val(x)$ as an interval $(start,end]$,
initially set to $(0,2^i]$.
At each step, the interval is updated according to the current bit of $R$ being read. If the bit is $0$, then by Algorithm~\ref{alg:extendR} it must hold that
$val(x) \le start + N(w,y.0,\ell_i,\delta_i)$. Otherwise,
$val(x) > start + N(w,y.0,\ell_i,\delta_i)$. Thus, at each iteration, the
interval $(start,end]$ containing $val(x)$ is narrowed accordingly. We show that after
$\ell_i$ iterations, the resulting interval uniquely determines $x$.

\begin{restatable}{claim}{loopendtwo}\label{claim:loopend2}
	For any $i \in \N$, at the end of stage $i$ in Algorithm \ref{alg:decodeX}, there is a unique $x$ such that $val(x) \in (start, end]$.
\end{restatable} 
\begin{proof}
	Assume that two $x,x' \in \Sigma^i$ (with $val(x) \leq val(x')$) are in $(start,end]$ at the end of stage $i$ in Algorithm \ref{alg:decodeX}.
	It follows that $start < val(x) \leq val(x') \leq start + N(w,y,\ell_i,\delta_i)$. Since $|y| = \ell_i$, from the definition of $N$ (Lemma \ref{lem:KolmogorovIneqbinSearch}), it holds that $N(w,y,\ell_i,\delta_i) \leq 1$. As $val(x)$ and $val(x')$ are integers, we have that $val(x) = val(x')$, and so $x = x'$.
\end{proof}

\begin{algorithm}[]
	\caption{Using $R_i$ to compute $x$}
	\label{alg:decodeX}
	Stage $i$:
	\begin{algorithmic}[1]
		
		\State  Initialize $k_{i-1} \;\gets\;\sum_{j=1}^{i-1}\ell_j $
		\State  Initialize $w \;\gets R[ :k_{i-1}]$ \Comment{The first $k_{i-1}$ bits of $R$}
		\State Initialize $start \gets 0$, $end \gets 2^i$ \Comment{$ val(x) \in (start, end ]$}
		\State Initialize $y \gets \lambda$
		\For{$j = 0$ to $\ell_i-1$}
		\If{$R[k_{i-1} + j] = 0$} \Comment{$ val(x) \le start +  N(w,y.0,\ell_i,\delta_i)$}
		\State $end \gets start +  N(w,y.0,\ell_i,\delta_i)$
		\State $y \gets y.0$
		\Else \Comment{$ val(x) > start +  N(w,y.0,\ell_i,\delta_i)$}
		\State $start \gets start +  N(w,y.0,\ell_i,\delta_i)$
		\State $y \gets y.1$
		\EndIf
		\EndFor
		\State Find $x$ such that $val(x) \in (start, end]$
	\end{algorithmic}
\end{algorithm}
 We now analyse the redundancy of the reduction, that is the number of oracle bits of $R$ used to compute the first $n$ bits of $X$.
	 
	 \textbf{Redundancy.} At the end of the
         $i^{th}$ stage, we obtain the first
         $n_i = \sum_{j = 1}^{i} j = \frac{i (i+1)}{2}$ bits of $X$.
         To obtain this, the number of bits of $R$ used,
         \begin{align*}
           m_i \;&= \sum_{j=1}^i \ell_j
                     = \sum_{j=1}^i  j + \log (1+j^{2})
                     = \frac{i (i+1)}{2} + \sum_{j=1}^i \log
                       (1+j^{2}) \\ 
                     &\leq n_i + 2i\log i \leq n_i + \sqrt{2n_i}
                       \log(2n_i) 
                     = n_i + o(n_i).
                 \end{align*}  
         Note that to obtain the bits between $n_{i}$ and $n_{i+1}$,
         we still need to query $m_{i+1}$ bits of $R$. But we have
         $m_{i+1} \leq n_{i+1} + \sqrt{2n_{i+1}} \log(2n_{i+1})$.
         Using $n_{i+1} = n_i + i + 1$, and $\sqrt{2n_i} > i$, we have
         $m_{i+1} \leq n_{i} + 2 \sqrt{n_i} \log(n_i)$ which is still
         $n_{i} + o(n_i)$.

         \textbf{Time Complexity.} :
         Let $t(n) : \N \to \N$
         be any function such that $t(n) = \omega(n^k)$ for all $k \in \N$.
  		 Take $t'(n) = t(n/2)/n$. \footnote{More precisely, take $t'(n) = \lfloor t(\lfloor n/2 \rfloor)/n \rfloor$.} Note that $t'(n) = \omega(n^k)$ for all $k \in \N$. We take $t'(n)$ to be the running time of the martingale $d$ (See Lemma \ref{lem:univ-polytime}).
  		  	       
         At stage \(i\), we evaluate \( {d}\) on \(\ell_i\)
         strings, with each evaluation taking time at most
         \(t'(m_i)\). So the total computation time after
         stage $i$ ends is less than $m_i \cdot t'(m_i)$.
         For large enough $n_i$, we have $m_i \leq 2\cdot n_i$. 
         So to compute the first $n$ bits of $X$, the total time
         taken is at most $\mathcal{O}(n \cdot t'(2 \cdot n)) = \mathcal{O}(t(n))$.
\end{proof}

\section{Polynomial-time dimension and Polynomial-time reductions}

The constructive dimension of a sequence represents the density of information in the sequence.

\begin{theorem}[Lutz \cite{Lutz03}, Mayordomo \cite{Mayordomo02}]
	For all $X \in \Sigma^\infty$, 
	$
	\cdim(X) = \liminf_{n \to \infty} \frac{K(X \restr n)}{n}.
	$
\end{theorem}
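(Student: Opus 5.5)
The statement is the Lutz--Mayordomo characterisation $\cdim(X) = \liminf_{n} K(X\restr n)/n$, where $\cdim$ is defined through lower semicomputable $s$-gales: $d(w) = 2^{-s}[d(w0)+d(w1)]$, and $\cdim(X)$ is the infimum of $s$ for which some such gale succeeds on $X$. I would prove the two inequalities separately, using the correspondence between $s$-gales and martingales, $d_s(w) = 2^{(s-1)|w|} d_1(w)$, together with the correspondence between lower semicomputable martingales and the universal semimeasure $\mathbf{m}$ (equivalently $2^{-K}$ up to multiplicative constants, by the coding theorem).

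\textbf{Upper bound ($\cdim(X) \le \liminf_n K(X\restr n)/n$).} Fix rationals $s' > s > \liminf_n K(X\restr n)/n$. Then there are infinitely many $n$ with $K(X\restr n) < sn$. Define
\[
d(w) = \sum_{n} \; \sum_{u \in \Sigma^n,\, K(u) < sn} 2^{(s'-1)\cdot(\text{stuff})},
\]
or, more concretely, let $d$ be the $s'$-gale obtained by summing, for each $u$ with $K(u)<s|u|$, the gale that bets all its capital along $u$. Each such $u$ contributes at $u$ roughly $2^{s'|u|}\cdot 2^{-s|u|}$, because we assign it initial weight $2^{-s|u|}$. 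The total initial capital is $\sum_n |\{u\in\Sigma^n : K(u)<sn\}|\,2^{-sn} \le \sum_n 2^{sn}\,2^{-sn}$, which diverges. To make it converge I would instead weight each $u$ by $2^{-K(u)}$ and use Kraft's inequality. With that weighting, the capital at $u$ is at least $2^{s'|u| - K(u)} \ge 2^{(s'-s)|u|}$, which tends to infinity along the infinitely many good prefixes of $X$. The gale is lower semicomputable because $K$ is upper semicomputable. Hence $\cdim(X) \le s'$.

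\textbf{Lower bound ($\liminf_n K(X\restr n)/n \le \cdim(X)$).} Take $s > \cdim(X)$ and a lower semicomputable $s$-gale $d$ succeeding on $X$. By scaling, assume $d(\lambda)\le 1$. Then $\mu(w) = 2^{-s|w|} d(w)$ satisfies $\mu(w) = \mu(w0)+\mu(w1)$, so it is a lower semicomputable semimeasure on strings of each fixed length. The function $w \mapsto \mu(w)/|w|^2$, summed over all lengths, has total mass at most $\sum_n n^{-2} < \infty$, and it is lower semicomputable. The coding theorem (universality of $\mathbf{m}$) then gives
\[
K(w) \le s|w| - \log d(w) + 2\log|w| + O(1).
\]
Along the infinitely many $n$ with $d(X\restr n) \ge 1$ this yields $K(X\restr n) \le sn + O(\log n)$. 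Therefore $\liminf_n K(X\restr n)/n \le s$.

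The main obstacle is technical, in the first direction: the combined object must be a genuine lower semicomputable $s'$-gale with finite initial capital. That requires handling the enumeration of $\{u : K(u) < s|u|\}$ and extending each contribution to all strings, using conditional gale values on prefixes and even splitting beyond $u$. The fix is to define, for each $u$, $d_u(w) = 2^{s'|w|}$ times the $2^{-K(u)}$-weighted indicator that $w \prefix u$, extended as an $s'$-gale beyond $u$. Kraft's inequality then bounds the total capital, and the positive gap $s'-s$ absorbs the loss.
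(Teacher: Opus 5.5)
The paper gives no proof of this statement. It quotes the result from Lutz and Mayordomo as background, so there is nothing internal to compare against. Your argument is correct and is essentially Mayordomo's standard proof. For $\cdim(X)\le\liminf_n K(X\restr n)/n$ you sum Kraft-weighted $s'$-gales that bet everything along low-complexity strings. For the converse you pass $d\mapsto 2^{-s|w|}d(w)$ to a lower semicomputable semimeasure and apply the coding theorem.

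Some points to clean up:
\begin{itemize}
\item Your worry that the naive weighting diverges is unfounded. On $B_s=\{u : K(u)<s|u|\}$ you have $2^{-s|u|}<2^{-K(u)}$, so Kraft's inequality already gives $\sum_{u\in B_s}2^{-s|u|}\le 1$. The counting bound $|B_s\cap\Sigma^n|\le 2^{sn}$ is simply too crude. That weighting (Mayordomo's) and your $2^{-K(u)}$ weighting both work. Delete the placeholder display containing ``stuff''.
\item For lower semicomputability, write the gale out explicitly: $d(w)=2^{s'|w|}\bigl[\sum_{u\sqsupseteq w}2^{-K(u)}+\sum_{u\sqsubset w}2^{-(|w|-|u|)}2^{-K(u)}\bigr]$. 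This is a uniformly enumerable sum of lower semicomputable nonnegative terms, bounded by $2^{s'|w|}$.
\item In the converse, divide by $(|w|+1)^2$ rather than $|w|^2$, to cover $w=\lambda$. Also take $s$ rational, so that $2^{-s|w|}$ is computable.
\end{itemize}

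You can also avoid the coding theorem in the converse. Since $\sum_{w\in\Sigma^n}d(w)=2^{sn}d(\lambda)$, fewer than $2^{sn}d(\lambda)$ strings of length $n$ satisfy $d(w)>1$. This set is c.e. uniformly in $n$, so giving $n$ prefix-freely plus an index into its enumeration yields $K(w)\le sn+2\log n+O(1)$ directly.
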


Doty \cite{Doty08} shows that the constructive dimension of a sequence can be characterised by the rate of oracle use in the 
Ku\v{c}era--G\'{a}cs reduction. This means that the amount of random bits queried from the source random $R$ is proportional to the rate of randomness in the target sequence $X$.

\begin{theorem}[Doty \cite{Doty08}] \label{thm:Doty}
	For all $X \in \Sigma^\infty$, there exists a Martin-Löf random $R \in \Sigma^\infty$ such that
	$X \leq_T R$ via $M$ with oracle use $u_n$ such that
	\(
	\cdim(X) = \liminf_{n \to \infty} \frac{u_n}{n}.
	\)
\end{theorem}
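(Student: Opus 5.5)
We prove Theorem~\ref{thm:Doty} in two inequalities, following the same stage-wise coding as in the proof of Theorem~\ref{thm:quasiPolynKG}, but now with an unbounded (computable) martingale and with the coded information compressed. Write $s = \cdim(X)$.

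\textbf{Lower bound.} For any $R$ and any Turing functional $M$ with $X \le_T R$ via $M$ and use $u_n$, the string $R \restr u_n$ together with $n$ and a description of $M$ determines $X \restr n$. This gives
\[
K(X\restr n) \le K(R\restr u_n) + O(\log n).
\]
The crude bound $K(R\restr u_n) \le u_n + O(\log u_n)$ then yields $\liminf_n K(X\restr n)/n \le \liminf_n u_n/n$ as long as $u_n$ is polynomially bounded. If it is not, the inequality is trivial on those $n$. By the Lutz--Mayordomo theorem the left side is $s$, so $s \le \liminf u_n/n$ for every such reduction.

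\textbf{Upper bound.} Fix rationals $s' > s$ (if $s=1$, Theorem~\ref{thm:KuceraGacs} already suffices). There are infinitely many $n$ with $K(X\restr n) < s' n$. We let the information we encode at stage $i$ be a shortest program for a long prefix of $X$, rather than raw bits of $X$.
\begin{itemize}
\item Choose a fast-growing sequence $n_1 < n_2 < \cdots$ of good lengths, so that $K(X\restr n_i) < s' n_i$, and $n_{i-1} = o(n_i)$.
\item At stage $i$, encode a self-delimiting shortest program $p_i$ for $X\restr n_i$, with $|p_i| \le s' n_i$. Encode it into a losing extension of $R_{i-1}$ for the universal lower semicomputable martingale $d_0$, using the binary-search scheme of Algorithm~\ref{alg:extendR}.
\item Since $d_0$ is only lower semicomputable, the counting in Lemma~\ref{lem:KolmogorovIneqbinSearch} must be replaced by the Kučera--Gács trick. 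We work in the $\Pi^0_1$ class of sequences staying below $d_0 \le c$, and pick extensions among those not yet known to be bad. Gács' block coding (as in \cite{Gacs86,MerkleMihailovic}) encodes $m$ bits with $m + o(m)$ bits while staying in the class.
\item Decoding: from $R_i$ we recover $p_i$, run it to obtain $X\restr n_i$, and derive intermediate bits from the latest good prefix. This is where the construction needs care.
\end{itemize}

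To make $u_n/n$ small only along the good lengths, the oracle use at $n = n_i$ is
\[
\sum_{j\le i} \bigl(|p_j| + o(|p_j|)\bigr) \le s' n_i + o(n_i) + O(n_{i-1}),
\]
which is $(s'+o(1))\,n_i$ by the growth condition. Hence $\liminf u_n/n \le s'$. To get a single $R$ achieving exactly $s$, we let $s'$ decrease to $s$ along stages, using $s'_i \downarrow s$ and picking $n_i$ good for $s'_i$. The bits of $X$ at positions strictly between $n_{i-1}$ and $n_i$ are only available after stage $i$, but this is fine for a $\liminf$ statement: $u_n$ may be large off the good lengths.

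\textbf{Main obstacle.} The hardest step is the embedding into a Martin-L\"of random with $o(m)$ overhead when $d_0$ is only approximable from below. This is not the explicit $N$-counting of Lemma~\ref{lem:KolmogorovIneqbinSearch}. I would first try to reuse Gács' construction as a black box: a reduction coding arbitrary $m$-bit blocks, taken as a given procedure from Theorem~\ref{thm:KuceraGacs}. The only change is that the coded sequence is the concatenation $p_1p_2\cdots$ instead of $X$. This concatenation is self-delimiting, so Theorem~\ref{thm:KuceraGacs} applied to $Y = p_1p_2\cdots$ yields $Y \le_T R$ with use $|Y\restr k| + o(k)$. Composing with the decoding of $X$ from $Y$ then gives the bound above directly.
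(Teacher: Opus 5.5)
The paper only cites this theorem and gives no proof of it. Your final argument is correct and follows the template the paper uses for its polynomial-time analogue: the lower bound comes from simulating the reduction on the queried prefix of $R$, as in Lemma~\ref{lem:DotyLem}, and the upper bound comes from packing shortest programs for sparse good prefixes of $X$ into an oracle $Y$, applying Theorem~\ref{thm:KuceraGacs} to $Y$ as a black box, and composing, as in Theorem~\ref{thm:KuceraGacsPolyExtended}. The direct embedding against $d_0$ sketched in your middle bullets is not needed once you take this route, and without a time bound you rightly do without the $v_i$ padding the paper needs.
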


Doty further shows an analogous result for resource bounded dimension of a sequence $X$ at
computable, polynomial-space settings. However, such a
characterization for polynomial-time dimension remained open.

\textbf{Polynomial-time density of information} :
At the resource bound of polynomial-time, there are many methods to quantify density of information.
The polynomial-time lower oracle use rate ($\rho^-_{\mathsf{poly}}$) of a sequence captures the least rate of oracle information needed to compute the sequence efficiently from any (not necessarily random) target source sequence.
$\Kpoly$, on the other hand, is the lower polynomial-time Kolmogorov complexity rate of a sequence. Polynomial time density of information can also be quantified by using betting algorithms, called polynomial-time $s$-gales. The corresponding notion is called polynomial-time dimension ($\cdimp$).

We show unconditionally that the
notions $\Kpoly$ and $\rho^-_{\mathsf{poly}}$ are equivalent (Theorem
\ref{thm:eqvlRhoKpoly}).
 Akhil, Nandakumar,
Pulari, and Sarma \cite{OWFDimension} show that $\cdimp$ and $\Kpoly$
are different quantities, under the cryptographic assumption that
one-way functions exist. Using this result, we show that
$\rho^-_{\mathsf{poly}}$ and $\cdimp$ are different, if one-way
functions exist.

So for every sequence $X$,
$\rho^-_{\mathsf{poly}}(X) = \Kpoly(X)$. But if there are
cryptographic one-way functions, then there are sequences $Y$ for
which $\rho^-_{\mathsf{poly}} (Y)< \cdimp(Y)$. Hence Doty's characterization
\emph{fails} in the polynomial-time setting, if one-way functions
exist.

Further, we extend the quasi-polynomial-time Ku\v{c}era--G\'{a}cs to
show that the number of bits of the polynomial-time random $R$ used to
construct the first $n$ bits of $X$ is proportional to the $\Kpoly$
dimension of a sequence (Theorem \ref{thm:KuceraGacsPolyExtended}).

\subsection{Polynomial-time Kolmogorov complexity rate : $\Kpoly$}

The $t$-time-bounded Kolmogorov complexity of a string $x \in \Sigma^*$,
denoted $K^t(x)$, is the length of the shortest program that outputs
$x$ within time $t(|x|)$. We follow the convention of Sipser \cite{Sipser83}, and keep $t$ to be a function.

\begin{definition} [Sipser \cite{Sipser83}]  \label{def:Kt}
  Let \( t \) be a time-constructible function, and let
  \( x \in \Sigma^* \) be a finite binary string. The
  \emph{\( t \)-time bounded Kolmogorov complexity} of \( x \) is
	
	\(
          K^t(x) = \min\{ |\Pi| \mid \Pi \in \Sigma^* \text{ and }
          U(\Pi) = x \text{ in } t(|x|) \text{ steps} \},
	\)
	where \( U \) is any fixed universal prefix-free Turing machine. 
\end{definition}

%$\mathcal{K}_{\mathsf{poly}}$ is the infimum of lower rate of $K^t$ of a sequence, over all polynomials $t$.

\(\mathcal{K}_{\mathsf{poly}}(X)\) is the infimum, over all
polynomial time bounds \(t\), of the lower
asymptotic rate of 
\(K^t(X\upharpoonright n)\).

\begin{definition} [Hitchcock and Vinodchandran \cite{dercjournal}] 
	For any \( X \in \Sigma^\infty \),	
	
	\(
          \mathcal{K}_{\mathsf{poly}}(X) = \inf\limits_{t \in \text{poly}}
          \liminf\limits_{n \to \infty} \frac{K^t(X \upharpoonright n)}{n}.
	\)
	
\end{definition}
\subsection{Polynomial-time oracle use rate : $\rho^-_{\mathsf{poly}}(X)$}

Given sequences $X, Y$, we say that $X \leq_P Y$ if there exists a Turing machine (OTM) $M$ with oracle access to $Y$ that produces $X$, taking at most $t(n) \in \poly$ time to produce the
first $n$ bits of $X$.

$\rho^-_M(X, Y)$
is the lower limiting ratio of the number of
bits of $Y$ queried by $M$ to produce the first $n$ bits of $X$.

\begin{definition} [Doty \cite{Doty08}]
	Let \( X, Y \in \Sigma^\infty \) and \( M \in \text{OTM} \)
        such that \( X \leq_P Y \) via \( M \).   Define 
	$	\rho^-_M(X, Y) = \liminf_{n \to \infty} \frac{\#(X
            \upharpoonright n, M^Y)}{n}  
$,	
	where \( \#(X \upharpoonright n, M^Y) \) is the number of bits
        of \( R \) queried by \( M \) when computing the string \( X
        \upharpoonright n \). \footnote{If we instead define \( \#(X
        \upharpoonright n, M^Y) \) to be the index of the rightmost
        bit of \( Y \) queried by \( M \) when computing \( X
        \upharpoonright n \), all results of the present paper still
        hold.}
\end{definition}

The lower polynomial-time decompression ratio of a sequence
$X \in \Sigma^\infty$ is the optimal value of $\rho^-_M(X, Y)$ over
all such polynomial-time machines $M$ and oracles $Y$.

\begin{definition}[Doty \cite{Doty08}]
  The \emph{lower polynomial-time decompression ratio} of a sequence
  \( X \in \Sigma^\infty \) is defined as
	\(
          \rho^-_{\mathsf{poly}}(X) = \inf\limits_{\substack{Y \in
              \Sigma^\infty \\ M \in \text{OTM}}} \left\{ \rho^-_M(X,
            Y) \;\middle\vert\; X \leq_P^{ } Y \text{ via } M
          \right\}.
	\)

\end{definition}

\subsection{Equivalence between $\Kpoly$ and $\rho^-_{\mathsf{poly}}$}
The following lemma shows that $\Kpoly$ is less than or
equal to $\rho^-_{\mathsf{poly}}(X)$.  The proof follows by the same argument as in Doty \cite[Lemma 4.4]{Doty08}. Suppose there is a polynomial-time machine $M$ that
infinitely often queries less than $s\cdot n$ oracle bits to produce $X\restr n$.
Then we can use the machine $M$ as well as the oracle use to produce
$X\restr n$ in polynomial-time. 

Doty showed that
$\dim_\Delta(X) \leq \rho^-_{\Delta}(X)$ for resource
bounds $\Delta$ including computable time and
polynomial space. His proof relies on
Kolmogorov complexity characterizations of
dimension $\dim_\Delta(X)$ that hold for these resource bounds.
For polynomial-time, such a characterization
does not hold (under standard cryptographic
assumptions \cite{OWFDimension}), so Doty did
not obtain the corresponding bound for
$\cdimp$.
However, we observe that the same argument
directly yields the following bound for
$\mathcal{K}_{\mathsf{poly}}(X)$.
We include the proof for completeness.

\begin{restatable}{lemma}{DotyLem}[Doty \cite{Doty08}] \label{lem:DotyLem} 
For any
  \( X \in \Sigma^\infty \),
  $\mathcal{K}_{\mathsf{poly}}(X) \leq \rho^-_{\mathsf{poly}}(X).$
\end{restatable}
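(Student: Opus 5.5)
It suffices to show that for every rational $s > \rho^-_{\mathsf{poly}}(X)$ we have $\Kpoly(X) \leq s$; then letting $s$ decrease to $\rho^-_{\mathsf{poly}}(X)$ gives the lemma. So fix such an $s$. By the definition of $\rho^-_{\mathsf{poly}}$ there are $Y \in \Sigma^\infty$ and an oracle machine $M$ with the following properties: $X \leq_P Y$ via $M$, $M^Y(1^n)$ outputs $X \restr n$ within $p(n)$ steps for some polynomial $p$, and $\rho^-_M(X,Y) < s$. The last property means that for infinitely many $n$, $M$ queries fewer than $s\cdot n$ bits of $Y$ while computing $X\restr n$.

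For each such $n$ I will build a short program for $X\restr n$. Let $q_1,\dots,q_m$ be the positions of $Y$ queried by $M$ on input $1^n$, listed in query order, with $m < sn$, and let $b_1,\dots,b_m$ be the answers. The program $\Pi_n$ consists of three parts:
\begin{itemize}
\item[(a)] a constant-length description of a fixed simulator $S$ that has $M$ hard-wired;
\item[(b)] a prefix-free self-delimiting encoding of $n$ (and of $m$), using $O(\log n)$ bits;
\item[(c)] the answer string $b_1\cdots b_m$.
\end{itemize}
The simulator $S$ runs $M$ on $1^n$. Whenever $M$ makes its $k$-th oracle query, $S$ answers it with $b_k$; it never needs the addresses $q_k$. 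This works because $M$ is deterministic: the $k$-th query and everything before it are determined by the first $k-1$ answers, so replaying the answers in order reproduces the run of $M^Y(1^n)$ exactly. By the footnote convention, counting distinct queried bits or the rightmost index gives the same bound. If $M$ repeats a query, I will simply record its answer each time; repeats could make $m$ exceed the number of distinct bits queried. To avoid that, $S$ can cache the answers it has seen by address, so only first-time queries consume a fresh $b_k$. Then $m$ equals the number of distinct bits queried.

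The length of the program is $|\Pi_n| \leq c + O(\log n) + sn$. For the running time, simulating $M$ for $p(n)$ steps, including the cache lookups, takes at most $p(n)^2$ steps up to constant factors. Adding the overhead of the universal prefix-free machine $U$, which is polynomial, we get a single polynomial $t$, fixed independently of $n$, with $U(\Pi_n)=X\restr n$ within $t(n)$ steps. Since $|X\restr n| = n$, this matches Definition~\ref{def:Kt}. Hence $K^t(X\restr n) \leq sn + O(\log n)$ for infinitely many $n$. Therefore $\liminf_n K^t(X\restr n)/n \leq s$, and so $\Kpoly(X)\leq s$.

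The main point requiring care is the self-delimiting and timing bookkeeping. $U$ is prefix-free, so $n$ must be encoded prefix-free (for example with Elias coding) before the raw answer bits. The end of the answer string then need not be marked, because $S$ knows when $M$ halts. I also need to check that the time bound really is a fixed polynomial in $|x| = n$, independent of the program. This holds because $p$ is fixed by $M$, and the simulation overhead depends only on $p$ and $U$. The $O(\log n)$ term disappears in the rate, so no further estimate is needed.
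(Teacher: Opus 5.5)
Your proof is correct and follows essentially the same route as the paper's: fix $s > \rho^-_{\mathsf{poly}}(X)$ and take a witnessing pair $(Y,M)$. Then, for the infinitely many $n$ with fewer than $sn$ queries, describe $X \restr n$ by a constant-length code for $M$, a prefix-free code for $n$, and the (distinct, cached) oracle answers in query order, which the universal machine replays deterministically. Your handling of the time bound, where the simulation polynomial $t$ is allowed to differ from $M$'s running time $p$, is slightly more careful than the paper's, which reuses the same $t$ for both (the cached simulation may cost more than $p(n)^2$, but only polynomiality matters).
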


\begin{proof}
	Let $s > \rho^-_{\mathsf{poly}}(X)$. There exists a
	$Y \in \Sigma^\infty$, a $t \in \poly$ and a $t(n)$-time machine
	$M \in $ OTM such that $X \leq_{t(n)} Y$ via $M$ such that for
	infinitely many $n \in \N$,
	$\#(X \upharpoonright n, M^Y) < s\cdot n$. We show that
	$K^t(X \upharpoonright n) \leq s\cdot n$, which establishes the
	lemma.
	
	Let \( p_n \in \{0,1\}^{\#(X \upharpoonright n, M^P)} \) be the
	oracle bits of \( P \) queried by \( M \) on input \( n \), in the
	order in which they are queried. We assume without loss of
	generality that the oracle queries made by $M$ are distinct. This is
	because $M$ can store the result of queries to $Y$, using an extra
	linear amount of space.
	
	We encode the information in a stagewise manner. To specify this
	encoding, we fix the convention that $w_0, w_1, \dots$ is the
	standard enumeration of the set of finite strings. Let $\text{enc}:
	\Sigma^* \to \Sigma^*$ be a prefix-free encoding of binary strings -
	\emph{i.e.} the set $\{\text{enc}(w) \mid w \in \Sigma^*\}$ is a
	prefix-free set. There are encodings which ensure that
	$|\text{enc}(w)| \le |w| + 2 \log(|w|)+2$ (for example, see Li and
	Vitanyi, Chapter 3. \cite{LiVitanyi}).
	
	Consider the program
	$\pi_n = \pi_M.\text{enc}(w_n).\text{enc}(p_n)$, where $\pi_m$ is
	the prefix free description of the machine $M$. We have that
	$U(\pi_n) = X \restr n$, where $U$ is the machine that first decodes
	$M,n,p_n$ and simulates the run of $M^Y$ on $X \restr n$ using the
	$p_n$ as the answer to the oracle queries.
	
	We have that
	\begin{align*}
		\Kpoly(X) \leq
		\liminf_{n \to \infty} \frac{K^t(X \restr n)}{n} \leq
		\liminf_{n \to \infty}
		\frac{|p_n|+\log(|p_n|)+|w_n|+\log(|w_n|)+|\pi_M|}{n}.
	\end{align*}
	Note that $|w_n| = O(\log n)$ and $|\pi_M|$ is a constant. Thus, we have
	\begin{align*}
		\Kpoly(X) \leq
		\liminf_{n \to \infty}
		\frac{|p_n|+2\log(|p_n|)}{n} \leq
		\liminf_{n \to \infty} \frac{sn+2 \log(sn)}{n}
		= s.
	\end{align*}
\end{proof}

We show the converse as well, \emph{i.e.} $\rho^-_{\mathsf{poly}}(X)\leq \Kpoly$, so the equality also holds in polynomial-time. We sketch the idea below.

Suppose that for infinitely many $n$, there is a short description of $X\restr n$ of size $s\cdot n+O(1)$. Choose a subsequence $n_1,n_2,\dots$ such that $\sum_{i=1}^k n_i=o(n_{k+1})$ for all $k$. In the unrestricted-time setting, concatenating the short descriptions of $X\restr n_i$ into an oracle $Y$ suffices, since the overhead for storing previous descriptions is negligible.

In the polynomial-time setting, however, this naive encoding may fail: to obtain the $(n_i+1)^{th}$ bit, the machine may need to read another $s\cdot n_{i+1}$ bits of $Y$, which can be super-polynomial if, say, $n_{i+1}=2^{n_i}$. So concatenation alone does not work.

We therefore use a hybrid encoding. Before the description $\pi_i$ of $X\restr n_i$, we hardcode the first $\sqrt{n_i}$ bits of $X[n_{i-1}:n_i]$, denoted $v_i$, into $Y$. To let the machine detect where this hardcoded part ends without scanning too much of $Y$, we insert a $0/1$ flag after each $2^j$-th bit of $v_i$, for each $j<\log(\sqrt{n_i})$.
This encoding guarantees that the machine outputs $X\restr n$ in polynomial-time for all $n$, including the intermediate indices $n_i<n<n_{i+1}$.

\begin{restatable}{theorem}{eqvlRhoKpoly} \label{thm:eqvlRhoKpoly} 
  For any \( X \in \Sigma^\infty \),
$ \rho^-_{\mathsf{poly}}(X) = \mathcal{K}_{\mathsf{poly}}(X) $	.
\end{restatable}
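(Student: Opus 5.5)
One direction, $\Kpoly(X) \le \rho^-_{\mathsf{poly}}(X)$, is Lemma \ref{lem:DotyLem}. It remains to show $\rho^-_{\mathsf{poly}}(X) \le \Kpoly(X)$. Fix $s > \Kpoly(X)$. Then there are a polynomial $t$ and infinitely many $n$ with $K^t(X\restr n) < s\cdot n$. We will build an oracle $Y$ and a polynomial-time oracle machine $M$ with $X \le_P Y$ via $M$, such that $\#(X\restr n, M^Y) \le s n + o(n)$ for infinitely many $n$. This gives $\rho^-_M(X,Y)\le s$, and letting $s \downarrow \Kpoly(X)$ finishes the proof.

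\textbf{The good lengths.} First extract from the good lengths a sparse subsequence $n_1<n_2<\cdots$ with $\sum_{j\le i} n_j = o(n_{i+1})$. A concrete choice is $n_{i+1} \ge 2^{n_i}$ together with $\sqrt{n_{i+1}} > n_i$. Let $\pi_i$ be a witnessing $t$-time program for $X\restr n_i$, so $|\pi_i| < s n_i$.

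\textbf{The oracle $Y$.} Build $Y$ as the concatenation of blocks $B_1B_2\cdots$. Block $B_i$ consists of three parts:
\begin{itemize}
\item the string $v_i$, the first $\lceil\sqrt{n_i}\rceil$ bits of $X[n_{i-1}:n_i]$, with a flag bit inserted after positions $2^j$ of $v_i$ (flag $1$ meaning ``$v_i$ continues'', flag $0$ at the last doubling), so that $O(\log n_i)$ flags are used;
\item a prefix-free encoding of $n_i$;
\item the program $\pi_i$, itself prefix-free.
\end{itemize}

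\textbf{The machine $M$.} On input $1^n$, $M$ walks through the blocks.
\begin{itemize}
\item In block $i$, it reads $v_i$ by doubling, using the flags to learn its length. It also learns $n_i$ and skips past $\pi_i$, since $|\pi_i|$ is recoverable from prefix-freeness.
\item If $n \le n_{i-1} + |v_i|$, it outputs $X\restr n_{i-1}$ (computed earlier by running $U(\pi_{i-1})$), followed by the needed bits of $v_i$.
\item Otherwise, if $n \le n_i$, it runs $U(\pi_i)$ for $t(n_i)$ steps and outputs the prefix of length $n$.
\end{itemize}

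\textbf{Main obstacle: the polynomial time bound at intermediate $n$.} The key step is bounding the running time for intermediate $n$ with $n_{i-1} < n \le n_i$.
\begin{itemize}
\item If $n \le n_{i-1}+\sqrt{n_i}$, the work is dominated by $t(n_{i-1})$, plus reading blocks of total length $O(\sum_{j<i} s n_j) = O(n_{i-1})$, plus the doubling scan of $v_i$. That scan reads at most twice the needed bits, i.e. $O(n)$, and it is here that the flag trick is essential, so that we do not read all of $v_i$.
\item If $n > n_{i-1}+\sqrt{n_i}$, then $n \ge \sqrt{n_i}$. The cost $t(n_i) + |B_i|$ is then at most polynomial in $n^2$, hence still polynomial in $n$.
\end{itemize}
I expect the careful bookkeeping here to be the main work: in particular, checking that decoding the length of $\pi_i$ and $n_i$ does not force reading beyond what is needed, and choosing the threshold ($\sqrt{n_i}$ versus $n_i^{\epsilon}$) so that both cases are polynomial.

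\textbf{Oracle use at $n=n_i$.} The number of queried bits is
\[
\sum_{j\le i}\bigl(|\pi_j| + \sqrt{n_j} + O(\log n_j)\bigr) \le s n_i + \sqrt{n_i} + o(n_i) = s n_i + o(n_i),
\]
using sparsity for the earlier blocks. Hence $\liminf_n \#(X\restr n,M^Y)/n \le s$. Finally, I would remark that if $\#$ counts the rightmost queried index instead, the same bound holds, since queries proceed left to right.
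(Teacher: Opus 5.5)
Your proposal is correct and follows essentially the same route as the paper. It passes to a sparse subsequence of good lengths and builds an oracle $Y$ that interleaves roughly $\sqrt{n_i}$ hard-coded bits of $X$, flagged at doubling positions so the machine never over-reads, with the short $t$-time programs $\pi_i$; a two-case time analysis then gives a bound of the form $t(n)+n+t(n^2)$, and the oracle use at $n=n_i$ is $s n_i+O(\sqrt{n_i})$. The differences are cosmetic and harmless: you add a prefix-free encoding of $n_i$, which the paper avoids by running $U(\pi_i)$ and comparing the output length with $n$, and you impose the stronger sparsity condition $n_{i+1}\ge 2^{n_i}$ where the paper uses $m_i\ge(\sum_{k<i}m_k)^2$.
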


\begin{proof}
	From Lemma \ref{lem:DotyLem}, it suffices to show that
	$ \rho^-_{\mathsf{poly}}(X) \leq \mathcal{K}_{\mathsf{poly}}(X)$.
	
	Take any $s > \Kpoly(X)$. There exists a $t \in \poly$ and
	infinitely many $\{n_i\}_{i \in \N}$ such that for all $i \in \N$,
	$K^t(X \restr n_i) \leq s\cdot n_i$.
	
	Given $i \in \N$, define $m_i$ as follows $m_0 = 0$ and for
	$i \geq 1$, $m_{i} = n_j$ for the least $j \in \N$ such that
	$m_{i}\geq (\sum_{k = 1}^{i - 1} m_k)^2$. Let $\pi_i$ be the prefix
	free program such that $U(\pi_i) = X \restriction m_i$. We have that
	$|\pi_i| \leq s\cdot m_i$.
	
	At stage $i \in \N$, let $\ell_i = \frac{1}{2} \log m_i$. Define
	$v_i$ = $0.u_i^1.0.u_i^2. \dots u_i^{l_i}.1$. Here
	$u_i^j = X[m_{i-1} + 2^{j - 1} : m_{i-1} + 2^j]$. 
	Note that $|v_i| \leq 2 \sqrt{m_i} $.
	
	Finally define
	$Y = v_1.\pi_1.v_2.\pi_2.v_3.\pi_3 \dots$.
	
	We can see that $X \leq_{t'(n)} Y$ via the machine given in
	Algorithm \ref{algm:1}. To produce $X \restr n$, $M$ uses time at
	most $t'(n) = t(n) + n + t(n^2)$, which is polynomial-time as $t(n)$
	is polynomial-time.
	
	For all $i \in \N$, we have
	$ \#(X \upharpoonright m_i, M^Y) \leq \sum_{j = 1}^{i} |\pi_j| + |v_j|$.
	
	As $|\pi_j| \leq s \cdot m_j$, and $|v_i| \leq 2 \sqrt{m_i}$, we have $\#(X \upharpoonright m_i, M^Y) \leq  \sum_{j = 1}^{i} s \cdot m_j + 2  \sqrt{m_j} \leq s\cdot m_i + 2 \sqrt{m_i} + 3.\sum_{j = 1}^{i-1} m_j $.  
	
	By the choice of $m_i$,  $\sqrt{m_{i}}\geq \sum_{j = 1}^{i - 1} m_j$.  So, we have $ \#(X \upharpoonright m_i, M^Y) \leq s \cdot m_i + 5 \sqrt{m_i}$. From
	this it follows that $\rho^-_{\mathsf{poly}}(X) \leq s$.
	
	\begin{algorithm} [H]
		\caption{Machine $M$ on input $n$} \label{algm:1}
		\begin{algorithmic}[1]
			\State Set $x \gets \lambda$
			\For{each stage $i$ until $X \restr n$ is recovered}
			\State Set $x' \gets \lambda$
			\For{$j \in \mathbb{N}$}
			\State Read the next bit
			\If{bit is $0$}
			\State Copy next $2^j$ bits from $Y$ and append to $x'$
			\ElsIf{bit is $1$}
			\State Read $\pi_j$
			\State Set $x \gets U(\pi_j)$ \hfill \textit{\*Run $\pi_j$, store the output to $x$.}
			\If{$|x \cdot x'| < n$}
			\State \textbf{continue} to next stage $i$
			\EndIf
			\EndIf
			\If{$|x \cdot x'| \geq n$}
			\State \Return $x \cdot x' \restr n$
			\EndIf
			\EndFor
			\EndFor
		\end{algorithmic}
	\end{algorithm}
\end{proof}

%\subsection{$\cdimp$ and $\rho^-_{\mathsf{poly}}$}

We can alternatively use polynomial-time $s$-gales to define
polynomial-time dimension, $\cdimp$ (See \cite{Stull20}).
Nandakumar, Pulari, Akhil and Sarma \cite{OWFDimension} show that the
notions $\cdimp$ and $\Kpoly$ are distinct if one-way functions exist.

\begin{theorem} [Nandakumar, Pulari, Akhil and Sarma \cite{OWFDimension}]
  If one-way functions exist, then there are sequences 
  $X \in \Sigma^\infty$ such that $\Kpoly(X) < \cdimp(X)$.
\end{theorem}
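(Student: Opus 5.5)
The plan is to build a single sequence $X$ out of outputs of a cryptographic pseudorandom generator. I want $X$ to satisfy $\Kpoly(X)=0$ while $\cdimp(X)=1$.

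\textbf{Step 1 (generator).} From a one-way function I get, by the H\aa stad--Impagliazzo--Levin--Luby construction, a pseudorandom generator secure against non-uniform polynomial-size distinguishers. Stretching it gives, for a fixed $\epsilon>0$, a generator $G_m:\Sigma^{m^\epsilon}\to\Sigma^m$ computable in time polynomial in $m$. No polynomial-size circuit family distinguishes $G_m(U_{m^\epsilon})$ from $U_m$ with more than negligible advantage.

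\textbf{Step 2 (construction).} Choose block lengths $L_1<L_2<\cdots$ growing fast, say $L_k\ge(\sum_{j<k}L_j)^2$ and in any case polynomially bounded in the position. Set $X=G(\sigma_1)\,G(\sigma_2)\cdots$ with $|G(\sigma_k)|=L_k$. The seeds are fixed one by one by diagonalisation. Enumerate the polynomial-time martingales $d_1,d_2,\dots$, using the convention of Lemma~\ref{lem:universality}: a machine that violates its time bound or the martingale condition is frozen and then bets evenly. Since $d_s(w)=2^{(s-1)|w|}d(w)$, polynomial-time $s$-gales correspond exactly to polynomial-time martingales. It therefore suffices to keep every $d_j$ bounded on $X$.

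Let $w$ be the prefix fixed so far. Consider the combination $D_k=\sum_{j\le k}2^{-j}d_j$, extended through block $k$. Because $D_k$ is a martingale, its average over uniformly random extensions $u\in\Sigma^{L_k}$ satisfies $\mathbb{E}_u D_k(wu)=D_k(w)$.

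\textbf{Step 3 (the main obstacle): transferring this average to pseudorandom extensions.} I need $\mathbb{E}_\sigma D_k(wG(\sigma))\le(1+2^{-k})D_k(w)$. I would argue this by contradiction. Hardwire the prefix $w$ as non-uniform advice; this is where non-uniform security of the generator is needed. Then a violation of the inequality yields a polynomial-size test: it computes $D_k(wu)$ to polynomial precision and accepts with probability proportional to $D_k(wu)/\bigl(2^{L_k}D_k(w)\bigr)$ or a truncation of it. Its acceptance probabilities on uniform versus pseudorandom inputs would then differ non-negligibly.

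Two technical points need care here. The values of $D_k$ can be as large as $2^{L_k}D_k(w)$, so I would use truncation or a Markov-type bound to obtain a bounded test. The precision of the rational outputs must also be kept polynomial.

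Granting the inequality, some seed $\sigma_k$ achieves $D_k(wG(\sigma_k))\le(1+2^{-k})D_k(w)$. Within a block the intermediate values of $D_k$ are controlled by Ville's inequality applied to the block, or by choosing the seed so that it also avoids the intermediate maxima with high probability. Hence $\sup_n d_j(X\restr n)\le 2^{j}\prod_k(1+2^{-k})\cdot O(1)<\infty$ for each $j$. So no polynomial-time $s$-gale with $s<1$ succeeds on $X$, which gives $\cdimp(X)=1$.

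\textbf{Step 4 (upper bound on $\Kpoly$).} At $n=N_k=\sum_{j\le k}L_j$, the program that lists $\sigma_1,\dots,\sigma_k$ in prefix-free form and runs $G$ on each has length $O(\sum_{j\le k}L_j^\epsilon\log L_j)=o(N_k)$. It outputs $X\restr N_k$ in time polynomial in $N_k$. Since $\Kpoly$ is a $\liminf$, these indices alone give $\Kpoly(X)=0<1=\cdimp(X)$.
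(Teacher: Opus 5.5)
The paper gives no proof of this theorem; it quotes it from \cite{OWFDimension}. So I am judging your argument on its own terms. The architecture is reasonable: PRG blocks with seeds of length $L_k^\epsilon$ give $\Kpoly(X)=0$ exactly as in your Step 4, and the real work is choosing seeds so that no polynomial-time gale succeeds. But Step 3 has two genuine gaps.

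\textbf{First gap: the expectation inequality is false.} The bound $\mathbb{E}_\sigma D_k(wG(\sigma))\le(1+2^{-k})D_k(w)$ cannot be rescued by truncation. Take a polynomial-time martingale that, at the start of each block, stakes half its capital on the block being $G(0^{L_k^\epsilon})$. It appears as some $d_j$, and it alone makes $\mathbb{E}_\sigma D_k(wG(\sigma))$ at least about $2^{L_k-L_k^\epsilon-O(k)}D_k(w)$. Yet every test has only negligible advantage against $G$. PRG security controls probabilities of bounded events, not expectations of a quantity ranging up to $2^{L_k}D_k(w)$. Concretely, a violation only gives your test (accept with probability $D_k(wu)/(2^{L_k}D_k(w))$) an advantage of $2^{-k}2^{-L_k}$, which is negligible. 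Since you only need one good seed, use a $0/1$ event instead. By Markov and Ville, a uniform $u$ satisfies both $D_k(wu)\le(1+\gamma)D_k(w)$ and $\max_{m\le L_k}D_k(w\,u\restr m)\le(4/\gamma)D_k(w)$ with probability at least $\gamma/4$. With $\gamma=2^{-k}$ this is non-negligible in $L_k$. The intermediate threshold must grow with $k$, because the union bound needs $1/c+1/(1+\gamma)<1$. So you get $d_j(X\restr n)=2^{O(j+k)}$ on block $k$, not your claimed uniform bound. That is still $2^{o(n)}$, which is all $\cdimp(X)=1$ needs.

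\textbf{Second gap: the combined test is not polynomial-size.} Even the indicator test is not polynomial-size in the sense the generator's security requires. $D_k$ sums $k$ martingales whose running times have unbounded polynomial degree. Hence the tests used at blocks $k=1,2,\dots$ are not bounded by any single polynomial in $L_k$, and security, which is asymptotic for each fixed polynomial-size family, says nothing about them. Two further points:
\begin{itemize}
\item Hardwiring $w$ needs non-uniform security, which is more than ``one-way functions exist'' under the usual uniform definition.
\item The advice must be taken worst-case over all prefixes, since the actual $w$ depends on the construction itself.
\end{itemize}

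\textbf{Suggested repair.} Handle each pair $(d,s)$ separately and choose the seeds at random. Let $E_k=\{\exists n\in(N_{k-1},N_k]: d(X\restr n)\ge 2^{(1-s)n}\}$. A single uniform polynomial-time distinguisher can sample the earlier blocks itself. So two hybrid steps (on blocks $k$ and $k-1$) together with Ville and Markov give $\mathrm{Prob}[E_k]\le 2^{N_{k-2}-(1-s)N_{k-1}/2}+2^{-(1-s)N_{k-1}/2}+\mathrm{negl}(L_{k-1})+\mathrm{negl}(L_k)$, which is summable. Borel--Cantelli over the countably many pairs $(d,s)$ gives $\cdimp(X)=1$ almost surely. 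Meanwhile Step 4 gives $\Kpoly(X)=0$ for every choice of seeds, so such an $X$ exists.

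Alternatively, assuming non-uniform security, you can keep your deterministic diagonalization with three changes. Activate $d_j$ only from a block beyond which the worst-case (over hardwired prefixes) advantage against its individual test is below $1/(4k)$. Control each active $d_j$ separately with a per-block factor $2k$ via Ville. Then take a union bound over the at most $k$ active martingales.
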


We use this result along with Theorem \ref{thm:eqvlRhoKpoly} to show
that $\rho^-_{\mathsf{poly}}(X)$ and $\cdimp$ are distinct if One-way
functions exist.

\begin{corollary} 
  If one-way functions exist, then there are sequences $X \in
  \Sigma^\infty$ such that $\rho^-_{\mathsf{poly}}(X) < \cdimp(X)$.
\end{corollary}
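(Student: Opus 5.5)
This follows directly by combining the preceding theorem of Nandakumar, Pulari, Akhil and Sarma with Theorem \ref{thm:eqvlRhoKpoly}. Assume one-way functions exist. By the cited theorem, there is a sequence $X \in \Sigma^\infty$ with $\Kpoly(X) < \cdimp(X)$. We then fix this $X$ and use it as the witness for the corollary.

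The second step is to invoke Theorem \ref{thm:eqvlRhoKpoly}. That theorem holds unconditionally and for every sequence, so it gives $\rho^-_{\mathsf{poly}}(X) = \Kpoly(X)$. Substituting into the strict inequality yields
\[
\rho^-_{\mathsf{poly}}(X) = \Kpoly(X) < \cdimp(X),
\]
which is exactly the claim.

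There is no real obstacle here. The only point to check is consistency of definitions: $\Kpoly$ must be defined the same way in both results. In particular, it should be the infimum over polynomial time bounds of the lower rate of $K^t(X\restr n)$, with $t$ measured in the output length as in Definition \ref{def:Kt}. The same alphabet convention must also apply in both. I would note this, and also record the interpretation of the result: Doty's characterization $\dim_\Delta = \rho^-_\Delta$ fails at $\Delta = $ polynomial time under this cryptographic assumption.
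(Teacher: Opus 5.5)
Your proposal is correct and is exactly the paper's argument. You take a sequence $X$ with $\Kpoly(X) < \cdimp(X)$ from the one-way-function separation of Nandakumar, Pulari, Akhil and Sarma, and apply the unconditional equality $\rho^-_{\mathsf{poly}}(X) = \Kpoly(X)$ from Theorem \ref{thm:eqvlRhoKpoly}.
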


\subsection{$\Kpoly$ and Polynomial-time Ku\v{c}era--G\'{a}cs}

We now extend the quasi-polynomial-time Ku\v{c}era--G\'{a}cs
theorem, Theorem \ref{thm:quasiPolynKG}, taking into account the $\Kpoly$ dimension of the sequence $X$. We
show a quasi-polynomial-time Ku\v{c}era--G\'{a}cs reduction that only
queries bits of polynomial-time random $R$ corresponding to the
$\Kpoly$ dimension of the sequence $X$.

\begin{restatable}{theorem}{KuceraGacsPolyExtended} \label{thm:KuceraGacsPolyExtended} 
	For all $X \in \Sigma^\infty$, there exists a polynomial-time random 
	$R \in \Sigma^\infty$ such that $X \leq_{t(n)} R$ 
	via $M$ with oracle use $u_n$ satisfying
	$
	 \liminf_{n \to \infty} ({u_n}/{n}) \leq \Kpoly(X) ,
	$
	
	where
	$t(n) : \N \to \N$
	is any time-constructible function such that $t(n) = \omega(n^k)$ for all $k \in \N$.
\end{restatable}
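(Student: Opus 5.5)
The plan is to compose the two results already established. By Theorem~\ref{thm:eqvlRhoKpoly}, for any $s > \Kpoly(X)$ there are a sequence $Y$, a polynomial $p$ and a $p(n)$-time oracle machine $M_1$ with $X \leq_{p(n)} Y$ via $M_1$ such that, for infinitely many $n$, $M_1$ reads at most $s\cdot n + o(n)$ bits of $Y$ to produce $X \restr n$. In that construction $M_1$ reads $Y$ sequentially (the prefix $v_1\pi_1\cdots v_i\pi_i$), so for those $n$ the largest queried position of $Y$ is also at most $sn+o(n)$. Next we apply Theorem~\ref{thm:quasiPolynKG} to $Y$ rather than $X$. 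This gives a polynomial-time random $R$ and a reduction $M_2$ with $Y \leq_{t''(n)} R$ for a suitably slowed quasi-polynomial $t''$, where the first $m$ bits of $Y$ need only $m+o(m)$ bits of $R$.

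The composed machine $M = M_1^{M_2^R}$ runs $M_1$ on input $n$. Whenever $M_1$ queries a bit of $Y$, $M$ answers it by decoding $Y$ from $R$ stagewise with Algorithm~\ref{alg:decodeX}. The decoded prefix of $Y$ is cached, so each stage is decoded only once. Since $M_1$ runs in time $p(n)$, it queries only positions below $p(n)$, so at most $Y\restr p(n)$ is ever decoded. The decoding cost is therefore at most $t''(p(n))$ plus the $p(n)$ steps of $M_1$.

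We choose $t''$ so that $t''(p(n)) \le t(n)$. Because $t = \omega(n^k)$ for every $k$, the function $t''(m) = \lfloor t(\lfloor m^{1/c}\rfloor)/m\rfloor$, with $c = \deg p + 1$, is still $\omega(m^k)$ for every $k$ and is time-constructible when $t$ is, so Theorem~\ref{thm:quasiPolynKG} applies with $t''$. I expect this step to be the main obstacle. One must check that the decoding time bound is monotone in the queried prefix, so that the total cost is governed by the largest index queried and not by the query order. One must also check that the slowed-down bound remains legitimate, which needs care with floors and the exact time-constructibility of $t''$.

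For the oracle use, take one of the infinitely many good $n$, and let $q_n \le sn + o(n)$ be the largest $Y$-index queried by $M_1$. Recovering $Y \restr q_n$ from $R$ uses at most $q_n + o(q_n) \le sn + o(n)$ bits of $R$, by the redundancy analysis of Theorem~\ref{thm:quasiPolynKG}. That analysis also covers the case where $q_n$ falls in the middle of a stage, since at most $m_{i+1} = n_i + o(n_i)$ bits are needed there. Hence $u_n/n \le s + o(1)$ for infinitely many $n$, so $\liminf_n u_n/n \le s$. We fixed $s$ before constructing $R$, and $R$ depends on $s$. If $\Kpoly(X)$ is not attained, we instead interleave: we run the Theorem~\ref{thm:eqvlRhoKpoly} construction with the sequence $s_i = \Kpoly(X)+1/i$ along the chosen lengths $m_i$, so that $|\pi_i| \le s_i m_i$. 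This yields a single $Y$, and one application of Theorem~\ref{thm:quasiPolynKG} then gives $\liminf_n u_n/n \le \Kpoly(X)$. The obstacle in this step is that the good lengths for different $s_i$ may come from different polynomial time bounds $t_i$. If so, I would settle for a fixed $s$, and handle the infimum by noting that the good lengths for a single $t$ attaining rate below every $s > \Kpoly(X)$ can be chosen along one polynomial $t$ when $\Kpoly(X)$ is approached by a single $t$.
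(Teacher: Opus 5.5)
Your architecture is the paper's: build $Y=v_1\pi_1v_2\pi_2\cdots$ as in Theorem~\ref{thm:eqvlRhoKpoly}, apply Theorem~\ref{thm:quasiPolynKG} to $Y$, and compose. For a fixed $s>\Kpoly(X)$ your argument is sound, and your treatment of the composed running time via $t''(p(n))\le t(n)$ is more careful than the paper's one-line claim. But this only shows that for each $s>\Kpoly(X)$ there is an $R_s$ with $\liminf_n u_n/n\le s$. The theorem needs a single $R$ with $\liminf_n u_n/n\le\Kpoly(X)$, and your fallback does not close this gap. Interleaving $s_i\downarrow\Kpoly(X)$ is unproblematic when one polynomial works for every $i$. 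However, nothing guarantees that $\Kpoly(X)=\inf_{t\in\poly}\liminf_n K^t(X\restr n)/n$ is attained by a single $t$. When it is not, your route fails for every polynomial-time $M_1$. It needs $\rho^-_{M_1}(X,Y)\le\Kpoly(X)$, and by the argument of Lemma~\ref{lem:DotyLem} a $p(n)$-time $M_1$ with that property would yield one polynomial $p'$ with $\liminf_n K^{p'}(X\restr n)/n\le\Kpoly(X)$.

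The missing idea is that the theorem only asks for a superpolynomial bound, so the intermediate reduction $X\le Y$ may itself be superpolynomial. The paper takes, for each $i$, a rational $s_i$ with $\Kpoly(X)<s_i<\Kpoly(X)+2^{-i}$ and its own polynomial $t_i$. It then picks a length $m_i$ with $K^{t_i}(X\restr m_i)<s_im_i$ and $m_i\ge(\sum_{k<i}m_k)^2$, and places $\pi_i$ (with $|\pi_i|\le s_im_i$) in $Y$. Algorithm~\ref{algm:1} never needs to know $t_i$; it just runs $U(\pi_i)$ until it halts. For $m_{i-1}\le n<m_i$ the cost is about $t_i(n)+n+t_i(n^2)$. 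The paper absorbs this by appealing to $t=\omega(n^k)$. To make that precise, choose each $m_i$ far enough out that $t_i(n^2)$ is dominated by a fixed superpolynomial $T_1(n)$ from that point on; then $X\le_{T_1(n)}Y$ holds uniformly. Now $\#(X\restr m_i,M^Y)\le s_im_i+5\sqrt{m_i}$ for every $i$, so one $Y$, and hence one $R$, gives $\liminf_n u_n/n\le\lim_i s_i=\Kpoly(X)$. Your time analysis must then be redone with $T_1$ in place of $p$. Choose the bound $T_2$ for Theorem~\ref{thm:quasiPolynKG} so that $T_2(T_1(n))+T_1(n)\le t(n)$. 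This is possible because both $T_1$ and $T_2$ can be taken superpolynomial yet growing arbitrarily slowly.
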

\begin{proof}
	Given $X \in \Sigma^\infty$, such that $\Kpoly(X) = s$. For any
	$i \in \N$, there exists an $s_i \in \Q$ such that $s_i > \Kpoly(X)$
	and $s_i - \Kpoly(X) < 2^{-i}$. Therefore, we have that there exists
	a $t_i(n) \in \poly$ such that for infinitely many $n_{i,j} \in \N$,
	$K_{t_i}(X \restr n_{i,j}) < s_i \cdot n_{i,j}$.
	
	First, we use a technique similar to the proof in Theorem
	\ref{thm:eqvlRhoKpoly} to combine these finite prefixes into a single
	string $Y \in \Sigma^\infty$ such that $X \leq_{t(n)} Y$ such that
	for infinitely many $n \in \N$, the oracle use
	$u_n <s \cdot n + o(n)$.
	
	For all $i \in \N$, define $m_i$ as follows $m_i = 0$ and for
	$i \geq 1$, $m_{i} = n_{i,j}$ for the least $j \in \N$ such that
	$m_{i}\geq (\sum_{k = 1}^{i - 1} m_k)^2$. Let $\pi_i$ be a prefix
	free program such that $U(\pi_i) = X \restriction m_i$ and
	$|\pi_i| \leq s_i\cdot m_i$.
	
	Let $\ell_i = \frac{1}{2} \log m_i$. Define $v_i$ =
	$0.u_i^1.0.u_i^2. \dots u_i^{l_i}.1$. Here
	$u_i^j = X[m_{i-1} + 2^{j - 1} : m_{i-1} + 2^j]$. Finally define
	$Y = v_1.\pi_1.v_2.\pi_2.v_3.\pi_3 \dots$.
	
	We can see that $X \leq_{T} Y$ via the machine $M$ given in
	Algorithm \ref{algm:1}. For any $n\in \N$, to produce $X \restr n$,
	$M$ uses time at most $t_i(n) + n + t_i(n^2)$, where $i$ is the
	least number such that $m_i > n$. As $t(n) \in \omega(\poly)$, we
	have that $X \leq_{t(n)} Y$.
	
	From the same analysis given in proof of Theorem \ref{thm:eqvlRhoKpoly}, we  have that for all $i \in \N$, the number of bits $M$ queries
	from $Y$ to produce $X \restr m_i$,
	$ \#(X \upharpoonright m_i, M^Y) \leq s \cdot m_i + 5 \cdot \sqrt{m_i} .$
	
	Now from the quasi-polynomial-time Ku\v{c}era--G\'{a}cs theorem
	(Theorem \ref{thm:quasiPolynKG}), we have that there exists an
	$R \in \Sigma^\infty$ such that $R$ is polynomial-time random and
	$Y \leq_{t(n)} R$.
	Additionally, to obtain the first $n$ bits of $Y$, we only need the first
	$n + o(n)$ bits of $R$.
	
	Now we combine both of these together. Given $X \in \Sigma^\infty$,
	we can first reduce $Y \leq_{t(n)} R$ and then $X \leq_{t(n)} Y$.
	The total time taken in the reduction is still
	$\mathcal{O}(t(n))$. Additionally, we
	have that for infinitely many $n \in N$, the oracle use from $Y$ is
	less than $s\cdot n + o(n)$, and therefore the oracle use from $R$,
	$u_n < s\cdot n + o(n)$. Therefore, we have that
	$\liminf_{n \to \infty} \frac{u_n}{n} \leq s$.
	
\end{proof}

\section{ Ku\v{c}era--G\'{a}cs for finite-state reductions}

In the previous sections, we showed that starting from
polynomial-time randoms and using
quasi-polynomial-time reductions, any sequence
\(X \in \Sigma^\infty\) can be extracted.
To further clarify the trade-off between reduction power
and source randomness, we prove a negative result showing
that the reduction cannot be too simple, even when we
start with a fairly weak randomness notion. 
The reducibility notion we use is that of finite-state
reductions, implemented via finite-state transducers.
The weak randomness notion
we consider is the class of normal sequences, which is a superset of
the class of polynomial-time randoms. 
A sequence \(N \in \Sigma^\infty\) is \emph{normal} if every
block of length \(\ell\) occurs with uniform limiting frequency
\(|\Sigma|^{-\ell}\).
Normality captures the notion of finite-state randomness \cite{BourkeHitchcockVinodchandran}, and it can be
characterized both by incompressibility via
information-lossless finite-state compressors
\cite{Dai2001}, and by the failure of finite-state
martingales \cite{SchSti72}.

Thus, the Ku\v{c}era--G\'{a}cs analogue for finite-state
reductions asks: \emph{Let $\Gamma$ and $\Sigma$ be finite alphabets.
	For every sequence $X \in \Gamma^\infty$, does there exist a normal
	sequence $N \in \Sigma^\infty$ from which $X$ can be computed by a
	finite-state transducer?}
We show that this is \emph{not} the case.

In particular, we prove that if a sequence $X$ is finite-state
reducible to a normal sequence $N$, then $X$ satisfies an
\emph{invariant} property: the empirical frequencies of symbols in $X$
must converge. Since it is easy to construct sequences without
limiting symbol frequencies (see Example~\ref{eg:one}),
such sequences are not finite-state reducible to a normal
sequence.

\begin{remark}
	Unlike earlier sections, we use two (possibly distinct) finite alphabets, $\Sigma$ for input and $\Gamma$ for output. This is a generalisation of the single-alphabet case ($\Sigma$ = $\Gamma$).
	This distinction is kept because finite-state measures of randomness
	such as normality \cite{Schmidt61} and finite-state dimension \cite{Pulari25} of the same real is sensitive to the base of representation used. 
	
\end{remark}
\subsection{Finite-state Reductions}

We define finite-state transducers and finite-state reductions.
\begin{definition} 
	A finite transducer  is a 6-tuple \( T = (Q, \Sigma,
        \Gamma, q_0, \delta ,\tau ) \),   
        where \( Q \) is a finite set of states, \( \Sigma \) is the finite input alphabet,
		\( \Gamma \) is the finite output alphabet,
		\( q_0 \in Q \) is the initial state,
		 \( \delta: Q \times \Sigma \to Q\) is the transition function, and
		 \( \tau: Q \times \Sigma \to \Gamma^*\) is the output function.
	\end{definition}

\begin{definition}
	The extended transition function 
	\( \hat{\delta}: Q \times \Sigma^* \to Q \) 
	is defined recursively by
	\(
	\hat{\delta}(q,\lambda)=q
	\quad\text{and}\quad
	\hat{\delta}(q,wa)=\delta(\hat{\delta}(q,w),a)
	\text{ for } w\in\Sigma^*,\, a\in\Sigma.
	\)

	The extended output function 
	\( \hat{\tau}: Q \times \Sigma^* \to \Gamma^* \) 
	is defined recursively by 
	\( \hat{\tau}(q,\lambda)=\epsilon \) and 
	\( \hat{\tau}(q,wa)=\hat{\tau}(q,w)\cdot
	\tau(\hat{\delta}(q,w),a) \) 
	for \( w\in\Sigma^*, a\in\Sigma \).
\end{definition}

\begin{definition} We say that a sequence $X \in \Gamma^\infty$ is
	finite-state reducible to $Y \in \Sigma^\infty$ (denoted
	$X\le_{FS} Y$) iff there exists a finite state transducer $T = $
	\( (Q, \Sigma, \Gamma, q_0, \delta ,\tau ) \) such that
	$ \lim_{\m \to \infty} \hat{\tau}(q_0 , Y \restriction m) = X.$
\end{definition}

\begin{definition}
  Given a $u \in \Sigma^\ell$, and $w \in \Sigma^{k.\ell}$,
  define
  $$P(u, w) = \frac{\Big|\{i < k : w[i\ell : ((i+1)\ell) - 1] = u\}
    \Big|}{k}.$$
\end{definition}

\begin{definition}
  Given a $Y \in \Sigma^\infty$, $Y$ is normal if for all
  $\ell \in \N$ and for all $u \in \Sigma^\ell$,
  $\lim_{k \to \infty} P(u, Y \restriction k \ell) =
  |\Sigma|^{-\ell}.$
\end{definition}

We now show an example of a binary sequence that does not have a
limiting distribution of ones (and hence of zeroes).

\begin{restatable}{example}{examplezeoone}\label{eg:one}
	Consider a sequence $X \in \{0,1\}^\infty$ defined as follows. $X[0]=0$, and for all $n
	\ge 1$, the bits $X[n!~:~(n+1)!-1]$ are all ones if $n$ is
	odd, and all zeroes if $n$ is even.
	
	Then, $\liminf_{n } P(1, X\restr n)  = 0$ and $\limsup_{n } P(1, X\restr n)  = 1$. So, $\lim_{n \to \infty} P(1, X\restr n)$
	does not exist.
\end{restatable}

\subsection{Impossibility of Ku\v{c}era--G\'{a}cs for finite-state reductions}

The following lemma shows that for any finite-state
transducer $T$ on a normal sequence, there exists a stationary
distribution $\pi : Q \to [0,1]$ such that the limiting frequency of each
transition $(q,a) \in Q \times \Sigma$ is $\pi(q)/|\Sigma|$.

\begin{lemma} [ Schnorr and Stimm \cite{SchSti72}] \label{lem:schnorrStimm} Let
  $T=(Q,\Sigma,\Gamma, q_0, \delta,\tau)$ be a finite-state
  transducer. Then there is a probability distribution
  $\pi : Q \to [0,1]$, such that for all normal sequences
  $Y \in \Sigma^\infty$, all $q\in Q$, and all
  $u\in\Sigma$,
  $$\frac{\pi(q)}{|\Sigma|}=\lim_{m\to\infty}\frac{1}{m}\Bigg|\Bigg|\Big\{k\le
  m: \hat{\delta}(q_0,Y[0:k]))=q\land Y[k+1]=u\Big\}\Bigg|\Bigg|$$
\end{lemma}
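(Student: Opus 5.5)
The plan is to view the run of $T$ on $Y$ as a finite Markov chain on $Q$ driven by uniform letters, and to transfer block-frequency information from normality of $Y$ to the visit frequencies of the pairs $(q,u)$.

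\textbf{Step 1: the candidate $\pi$.} Let $P$ be the stochastic matrix on $Q$ with
\[
P(q,q')=|\{a\in\Sigma:\delta(q,a)=q'\}|/|\Sigma|.
\]
Consider the closed communicating classes reachable from $q_0$. Take $\pi$ to be the limiting Ces\`aro average $\lim_m \frac1m\sum_{k<m} e_{q_0}P^k$; this limit always exists for finite chains. I first want to show that the empirical state frequencies along $Y$ converge to this $\pi$. I will then correct the statement if $\pi$ must depend on the class entered, and I expect this to be fixed by showing every normal $Y$ falls into the same class (Step 3).

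\textbf{Step 2: block counting via normality.} Fix a block length $L$ and cut $Y$ into consecutive blocks of length $L$. Normality says each $w\in\Sigma^L$ occurs with frequency $|\Sigma|^{-L}$ among aligned blocks. By the standard Wall/Schmidt equivalence, the same holds for all positions, not just aligned ones. The key device for passing to state frequencies is \emph{synchronizing words}. For a finite automaton, I will choose a long word $z$ that drives every state inside a given closed class into the same state; if the class has no synchronizing word, I will use words reducing the image size to a minimum. A normal $Y$ contains $z$ with positive frequency. So after the first occurrence, the state sequence is determined up to a bounded set by the most recent $L$ letters. Given the last $L$ letters, the state is then essentially a function of those letters (up to the minimal image set, whose distribution we average). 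Uniform frequency of length-$L$ windows then gives state-letter frequencies equal to $\frac{1}{|\Sigma|^L}\sum_w[\text{state after } w \text{ is } q]\cdot\frac{1}{|\Sigma|}$, up to an error that vanishes as $L\to\infty$. This limit is exactly $\pi(q)/|\Sigma|$. The factor $1/|\Sigma|$ for the next letter $u$ comes from normality of windows of length $L+1$.

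\textbf{Step 3: independence from $Y$ and absorption.} I will show $Y$ almost surely reaches a closed class and never leaves it; for normal $Y$ this is deterministic, since every word appears. I also need the class reached to be the same for all normal $Y$. If it is not, I would fall back to Schnorr--Stimm's original formulation, which I expect handles this via the fact that transient states have frequency $0$ and the chain from $q_0$ restricted to reachable structure fixes $\pi$.

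The main obstacle is Step 2: rigorously bounding the error from the non-synchronizing part and the mixing of periodic classes. I would first try the approximation argument that state frequencies over windows of length $L$ approach $e_{q}P^L$ averaged, letting $L\to\infty$ via Ces\`aro averaging.
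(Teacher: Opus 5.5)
\textbf{Step~3 fails, and the lemma as literally stated is false.} Not every normal $Y$ ends up in the same closed class, so no single $\pi$ works for all normal $Y$. Take $\Sigma=\{0,1\}$, $Q=\{q_0,q_A,q_B\}$, $\delta(q_0,0)=q_A$, $\delta(q_0,1)=q_B$, with $q_A,q_B$ absorbing. If $Y$ is normal, so is the sequence obtained by flipping its first bit. The frequency of the pair $(q_A,u)$ is $1/2$ along one of these two runs and $0$ along the other.

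Your Step~1 candidate, the Ces\`aro limit from $q_0$, equals $(0,\tfrac12,\tfrac12)$ here and matches neither run. In general it is the absorption-weighted mixture of the stationary distributions of the closed classes, so it is correct for no individual $Y$ once two closed classes are reachable from $q_0$. Your fallback (transient states have frequency $0$) does not help, because the problem is \emph{which} closed class is entered.

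The true statement, which is also how the Schnorr--Stimm result is usually quoted (often assuming $T$ strongly connected), lets $\pi$ depend on $Y$. It is $\pi=\pi_{C_Y}$, the unique stationary distribution of the uniform random walk on the closed class $C_Y$ that the run on $Y$ enters. The run does enter one, because a single word driving every state into a closed class occurs in $Y$. This weaker version is all that Lemma~\ref{lem:convergenceLemma} uses, since $Y$ is fixed there. The paper itself gives no proof, only the citation.

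\textbf{Step~2 has a real hole outside the synchronizing case.} Consider the parity automaton $Q=\{e,o\}$, which tracks the parity of the number of $1$s read. Both letters act as permutations, no word shrinks the image, and the current state is not a function of any bounded window of preceding letters. Averaging `over the minimal image set' tacitly assumes that, given the window, the actual state is distributed with the right weights. That is, it assumes the state is asymptotically independent of the upcoming letters, which is exactly what must be proved.

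The missing idea is to make the set of bad blocks independent of the entering state, instead of recovering the state from the window:
\begin{enumerate}
\item Fix $C=C_Y$, a pair $(q,u)$, and $\varepsilon>0$. Under uniformly random $w\in\Sigma^L$, the pair (state, next letter) along the run from any $p\in C$ is an irreducible Markov chain on $C\times\Sigma$ with stationary distribution $\pi_C(q)/|\Sigma|$.
\item By the weak law of large numbers, which holds for periodic classes too, the fraction of $w$ on which the number of visits to $(q,u)$ differs from $L\pi_C(q)/|\Sigma|$ by more than $\varepsilon L$ tends to $0$.
\item A union bound over the finitely many $p\in C$ gives a bad set $B_L\subseteq\Sigma^L$ of measure below $\varepsilon$ that works for every entering state simultaneously.
\item Cut $Y$ into aligned $L$-blocks. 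By the paper's aligned definition of normality, the fraction of blocks in $B_L$ tends to $|B_L|/|\Sigma|^L<\varepsilon$.
\item Every other block after the entry time into $C$ contributes the correct count up to $\varepsilon L$, whatever state it is entered in. Block boundaries cost $O(1/L)$.
\item Hence the $\liminf$ and $\limsup$ lie within $O(\varepsilon+1/L)$ of $\pi_C(q)/|\Sigma|$.
\end{enumerate}
This argument needs no synchronizing words and no sliding-window version of normality.
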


We show that if a sequence $X\le_{FS} Y$, and $Y$ is normal, then
the limiting distribution of the frequency of alphabets $a$ in $X$
must converge.

\begin{restatable}{lemma}{convergenceLemma}\label{lem:convergenceLemma}
 For all
  $ X\in \Gamma^{\infty},$ and $ Y\in \Sigma^\infty,$ if $Y$ is normal
  and $X\le_{FS} Y$, then there exists a probability distribution
  $p: \Gamma \to [0,1]$ such that
	 $\lim_{n\to\infty}P(a, X\upharpoonright n) = p(a).$
	
\end{restatable}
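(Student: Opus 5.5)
We use Lemma \ref{lem:schnorrStimm} to turn the output of the transducer into an ergodic average over transitions. Fix a transducer $T=(Q,\Sigma,\Gamma,q_0,\delta,\tau)$ witnessing $X\le_{FS} Y$, and let $\pi$ be the distribution from Lemma \ref{lem:schnorrStimm}. For each transition $(q,u)\in Q\times\Sigma$ and each $a\in\Gamma$, let $c_a(q,u)$ be the number of occurrences of $a$ in $\tau(q,u)$, and let $L(q,u)=|\tau(q,u)|$. Write $o_m=\hat{\tau}(q_0,Y\restr m)$. Then $|o_m|$ and the number of occurrences of $a$ in $o_m$ are both sums over the first $m$ transitions of $L$ and $c_a$. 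Dividing by $m$ and applying Lemma \ref{lem:schnorrStimm} gives
\[
\frac{|o_m|}{m}\to \Lambda:=\sum_{q,u}\frac{\pi(q)}{|\Sigma|}L(q,u),\qquad \frac{\#_a(o_m)}{m}\to C_a:=\sum_{q,u}\frac{\pi(q)}{|\Sigma|}c_a(q,u).
\]
This uses only finitely many terms, each a fixed constant times a limiting transition frequency. We will then set $p(a)=C_a/\Lambda$. Since $\sum_a c_a(q,u)=L(q,u)$, we get $\sum_a C_a=\Lambda$, so $p$ is a probability distribution.

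The step needing care is ensuring $\Lambda>0$. Since $X$ is infinite and equals $\lim_m o_m$, we know $|o_m|\to\infty$, but a priori only at a rate $o(m)$. I would handle this as follows. If $\Lambda=0$, then every transition taken with positive limiting frequency outputs $\lambda$. The plan is to argue that the transducer, from some point on, stays within the recurrent part (states with $\pi(q)>0$), and that all transitions there have positive frequency. Then the output would be finite, a contradiction.

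The point that normality forces the run eventually into a closed set of states where every transition is used is exactly the obstacle. I would prove it directly: if the run visits a state $q$ infinitely often, then normality of $Y$ gives that every finite word $v$ appears after infinitely many of these visits. Choosing $v$ to drive $q$ into a strongly connected sink component shows that states outside sink components are visited only finitely often, by the standard Schnorr--Stimm argument. Inside a sink component, every transition appears with positive frequency, again by normality. If extracting this from Lemma \ref{lem:schnorrStimm} is awkward, an acceptable fallback is to note that the transducer's output on the run is a sum of transition outputs, and that the states with $\pi(q)>0$ are exactly those visited with positive density.

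Finally, we pass from the subsequence of lengths $|o_m|$ to all $n$. For any $n$, pick $m$ with $|o_{m-1}|<n\le|o_m|$. Then $X\restr n$ differs from $o_m$ by at most $\max_{q,u}L(q,u)$ symbols. So $P(a,X\restr n)=\#_a(o_m)/|o_m|+O(1/n)$, and $\#_a(o_m)/|o_m|=(\#_a(o_m)/m)/(|o_m|/m)\to C_a/\Lambda$, since $\Lambda>0$ and $n\to\infty$ forces $m\to\infty$. Here $P(a,\cdot)$ is read with $\ell=1$, as a symbol frequency, so it is defined for all prefix lengths. This gives $\lim_n P(a,X\restr n)=p(a)$.
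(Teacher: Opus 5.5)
Your proposal is correct and takes the same route as the paper's proof: transition counts, Lemma~\ref{lem:schnorrStimm}, the ratio of the two limits, and interpolation between consecutive output lengths. Your extra step proving $\Lambda=\sum_{q,u}\pi(q)|\tau(q,u)|/|\Sigma|>0$ is genuinely needed and fills a gap the paper leaves open (its ``ignoring $o(1)$ error terms'' step divides by $n/m$ without checking the limit is nonzero), but base it on the sink-component argument rather than your fallback, which does not exclude infinitely many density-zero visits to output-producing states; the cleanest way to get trapping is one word $v$ that drives every state into a sink component, since $v$ occurs in the normal $Y$, the run never leaves that component afterwards, and every transition there has positive frequency.
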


\begin{proof}
	Let $T=(Q,\Sigma,\Gamma,q_0,\delta,\tau)$ be a finite-state transducer such that
	$X \le_{FS} Y$, and assume that $Y$ is normal.  Let
	$
	\ell := \max_{q\in Q}\max_{u\in\Sigma} |\tau(q,u)|
	$
	be the maximum output length of a single transition in $T$.
	
	Fix $a\in\Gamma$. For each $n\in\mathbb N$, let $m=m(n)$ be the least integer such that
	$
	n \le \bigl|\widehat{\tau}(q_0, Y\upharpoonright m)\bigr|.
	$
	
	Then $X\upharpoonright n$ is a prefix of
	$\widehat{\tau}(q_0, Y\upharpoonright m)$, and
	$
	n \le 
	\bigl|\widehat{\tau}(q_0, Y\upharpoonright m)\bigr|
	< n + \ell.
	$
	%	Consequently, the difference between symbol frequencies computed
	%	using $X\upharpoonright n$ and those computed using
	%	$\widehat{\tau}(q_0, Y\upharpoonright m)$ is $O(1/n)$.
	
	Now, for each transition $(q,u)\in Q\times\Sigma$, let
	$
	N_m(q,u) := \#\bigl((q,u),\,Y\upharpoonright m\bigr)
	$
	denote the number of times the transition $(q,u)$ is taken while processing
	$Y\upharpoonright m$. Let  $\#_a\bigl(\tau(q,u)\bigr)$ denote the number of times $a$ occurs in in $\tau(q,u)$.

	Every symbol $a\in\Gamma$ appearing in $X\upharpoonright n$
	is produced by some transition $(q,u)\in Q\times\Sigma$ of $T$. So, the number of occurrences of $a$ in the output $X \restr n$, upto to an additive error of $\ell$ is
	\[
	\#(a, X\upharpoonright n)
	= 
	\sum_{(q,u)\in Q\times\Sigma}
	N_m(q,u)\cdot \#_a\bigl(\tau(q,u)\bigr)
	\;.
	\]
	and hence upto an additive error of $\ell/n = o(1)$, we have
	\begin{align*}
		P(a, X\upharpoonright n)
		&=
		\frac{\#(a, X\upharpoonright n)}{n} 
		= \frac{m}{n} 
		\sum_{(q,u)\in Q\times\Sigma}
		\#_a\bigl(\tau(q,u)\bigr)\cdot\,
		\frac{N_m(q,u)}{m}.
	\end{align*}
	Also,
	\[
	\bigl|\widehat{\tau}(q_0, Y\upharpoonright m)\bigr|
	=
	\sum_{(q,u)\in Q\times\Sigma}
	N_m(q,u)\cdot|\tau(q,u)|,
	\]
	so upto an additive error of $\ell/m = o(1)$,
	\[
	\frac{n}{m}
	=
	\sum_{(q,u)\in Q\times\Sigma}
	\frac{N_m(q,u)}{m}\cdot|\tau(q,u)|
	\;.
	\]
	
	Combining, and ignoring $o(1)$ error terms,
	\begin{align*}
		P(a, X\upharpoonright n)
		&=
		\frac{
			\sum\limits_{(q,u)\in Q\times\Sigma}
			\#_a\bigl(\tau(q,u)\bigr)\cdot\,
			\frac{N_m(q,u)}{m}} {\sum\limits_{(q,u)\in Q\times\Sigma}
			\frac{N_m(q,u)}{m}\cdot|\tau(q,u)|}\;.
	\end{align*}

	By the Schnorr--Stimm lemma (Lemma \ref{lem:schnorrStimm}),
	\[
	\lim_{m\to\infty}\frac{N_m(q,u)}{m}=\frac{\pi(q)}{|\Sigma|}
	\qquad\text{for every }(q,u)\in Q\times\Sigma.
	\]
	Therefore the limit
	$
	\lim_{n\to\infty} P(a,X\upharpoonright n)
	$
	exists and equals
	\begin{align*}
		p(a)
		&=
		\frac{
			\sum\limits_{(q,u)\in Q\times\Sigma}
			\#_a\bigl(\tau(q,u)\bigr)\cdot\pi(q)
		}{
			\sum\limits_{(q,u)\in Q\times\Sigma}
			|\tau(q,u)|\cdot\pi(q)
		}.
	\end{align*}
	
	Finally,
	\[
	\sum_{a\in\Gamma}
	\sum_{(q,u)\in Q\times\Sigma}
	\#_a\bigl(\tau(q,u)\bigr)\,\pi(q)
	=
	\sum_{(q,u)\in Q\times\Sigma}
	|\tau(q,u)|\,\pi(q),
	\]
	so $\sum_{a\in\Gamma} p(a)=1$. Hence $p$ is a probability distribution on
	$\Gamma$.
\end{proof}

We show that an analogue of the Ku\v{c}era--G\'{a}cs theorem does not hold in
the finite-state setting.

\begin{theorem}
  There exists an $X \in \Gamma^\infty$, such that for any normal
  $N \in \Sigma^\infty$, $X \not\leq_{FS} N$.
\end{theorem}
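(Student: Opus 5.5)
The plan is to argue by contradiction, using the sequence of Example~\ref{eg:one} as the witness together with the invariant of Lemma~\ref{lem:convergenceLemma}. Since $\Gamma$ has at least two symbols, fix two distinct symbols of $\Gamma$ and call them $0$ and $1$. Let $X \in \Gamma^\infty$ be the sequence of Example~\ref{eg:one}, written with these two symbols.

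\textbf{Step 1: check the example.} Confirm that $\lim_{n\to\infty} P(1, X\restr n)$ does not exist. Consider the prefix $X\restr (n+1)!$ where $n$ is odd, so the final block $X[n!:(n+1)!-1]$ is all ones. This block has length $(n+1)! - n! = n\cdot n!$. That is a $1 - \frac{1}{n+1}$ fraction of the prefix, so $P(1, X\restr (n+1)!) \ge 1 - \frac{1}{n+1}$. Symmetrically, when $n$ is even the final block is all zeroes, and $P(1, X\restr (n+1)!) \le \frac{1}{n+1}$. Hence
\[
\liminf_{n} P(1, X\restr n) = 0 \quad\text{and}\quad \limsup_{n} P(1, X\restr n) = 1,
\]
so the limit does not exist. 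This is essentially the content of Example~\ref{eg:one}, so only a line of computation is needed.

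\textbf{Step 2: apply the invariant.} Suppose, for contradiction, that $X \le_{FS} N$ for some normal $N \in \Sigma^\infty$. By Lemma~\ref{lem:convergenceLemma}, applied with $Y = N$, there is a probability distribution $p : \Gamma \to [0,1]$ with $\lim_{n\to\infty} P(a, X\restr n) = p(a)$ for every $a \in \Gamma$. In particular $\lim_{n\to\infty} P(1, X\restr n)$ exists, which contradicts Step 1. Since $N$ was an arbitrary normal sequence, $X \not\le_{FS} N$ for every normal $N$, which proves the theorem.

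\textbf{Main obstacle.} The only subtlety is one of interpretation rather than difficulty. In the statement of Lemma~\ref{lem:convergenceLemma}, $P(a, X\restr n)$ must be read with block length $1$, i.e.\ as the single-symbol frequency of $a$ in the first $n$ symbols of $X$. I would make this reading explicit, and note that it is exactly the quantity whose oscillation Example~\ref{eg:one} exhibits. Under that reading, no estimates beyond those already carried out in Lemma~\ref{lem:convergenceLemma} are needed.
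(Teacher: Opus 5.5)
Your proposal is correct and matches the paper's proof: take a sequence whose single-symbol frequency oscillates (the sequence of Example~\ref{eg:one}) and derive a contradiction with Lemma~\ref{lem:convergenceLemma}. You also verify the example explicitly and note that $\Gamma$ must have at least two symbols; the paper leaves both implicit, so these are useful additions.
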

\begin{proof}
  Take any $X \in \Gamma^\infty$ such that for some $a \in \Gamma$,
  $\lim_{n \to \infty} P(a, X \restriction n)$ does not converge.
	
  If for some normal $N\in \Sigma^\infty$, $X \leq_{FS} N$, from Lemma
  \ref{lem:convergenceLemma}, we have that
  $\lim_{n \to \infty} P(a, X \restriction n)$ exists, which is a
  contradiction.
\end{proof}

\section{Acknowledgements}
We thank Laurent Bienvenu, Elvira Mayordomo, Joseph S. Miller, Subin Pulari, and Alexander Shen for helpful discussions and comments.
We also thank the anonymous reviewers for their constructive comments, pointing out errors, providing missing references and suggesting open questions.

\section{Open Problems}

\begin{itemize}
	\item Can we improve the quasi-polynomial-time bound in Theorem \ref{thm:quasiPolynKG} to polynomial-time. Does any inherent
	complexity-theoretic barriers prevent such a strengthening ? 
	\item Does an analogue of Theorem \ref{thm:eqvlRhoKpoly} hold for strong dimension. Can we show that $\rho^+_{\mathsf{poly}}(X) = \mathcal{K}^{str}_{\mathsf{poly}}(X)?$ (See \cite{Doty08} and \cite{dercjournal} for definitions).
\end{itemize}

\bibliography{main}

\end{document}